      \theoremstyle{plain}
      \newtheorem{assumption}{Assumption}
\newtheorem{theorem}{Theorem}[section]
\title{Semiparametric Estimation on Multi-treatment Causal Effects via Cross-Fitting}
\author{Jingying Zeng}
\date{August 2021}
\begin{document}
\maketitle

\begin{abstract}
Causal inference is a critical research area with multi-disciplinary origins and applications, ranging from statistics, computer science, economics, psychology to public health. In many scientific research, randomized experiments provide a golden standard for estimation of causal effects for decades. However, in many situations, randomized experiments are not feasible in practice so that practitioners need to rely on empirical investigation for causal reasoning. Causal inference via observational data is a challenging task since the knowledge of the treatment assignment mechanism is missing, which typically requires non-testable assumptions to make the inference possible. For several years, great effort has been devoted to the research of causal inference for binary treatments. In practice, it is also common to use observational data on multiple treatment comparisons. Within the potential outcomes framework, we propose a generalized cross-fitting estimator (GCF), which generalizes the doubly robust estimator with cross-fitting for binary treatment to multiple treatment comparisons and provides rigorous proofs on its statistical properties. This estimator permits the use of more flexible machine learning methods to model the nuisance parts, and based on relatively weak assumptions, while there is still a theoretical guarantee for valid statistical inference. We show the asymptotic properties of the GCF estimators, and provide the asymptotic simultaneous confidence intervals that achieve the semiparametric efficiency bound for average treatment effect. The performance of the estimator is accessed through simulation study based on the common evaluation metrics generally considered in the causal inference literature.

\end{abstract}

\section{Introduction}
Even though randomized experiments are the gold standard in establishing causal links, under certain circumstances, randomized experiments are not ethical nor feasible so that practitioners rely on observational studies to make causal inference decisions. For instance, in clinical trials, it might not be ethical to give control group placebo when a new therapy is expected to prolong the survival rates of patients. In other situations, randomized experiments are not practical due to financial and time constrains or expectation of low compliance. For observational data, a variety of approaches including matching, stratification, and weighting have been extensively discussed in binary treatments settings. Non-binary treatment comparisons, however, is also very common in practice. For instance, researchers might be interested in comparing the treatment effects among multiple medications; or even continuous treatments like dose-response.

The weak unconfoundedness proposed by \citet{imbens2000role}, that is, within each strata defined by $\bm{X}$, each unit being assigned to treatment level $j$ versus not assigned to this level is randomized, brought the classical potential outcomes framework \citep{rosenbaum1983central} for multi-treatment comparison to light. Based on the generalized IPW estimator proposed by \citet{imbens2000role}, \citet{feng2012generalized} revised the estimator by normalizing the weights (which are generalized propensity scores) so that the sum of the weights for each treatment group is one in expectation. \citet{tu2013using} extended the AIPW estimator to a multi-treatment setting. Later on, a series of articles in statistics literature have discussed estimating the pairwise average treatment effect for multi-level comparisons through weighting approach \citep{lopez2017estimation}. Several matching methods are mentioned in the literature as well, for instance, \citet{rassen2013matching} discussed the three treatment groups comparison using propensity score matching. \citet{yang2016propensity} further extended the matching algorithm, along with the trimming rule to multi-treatment settings. A number of researchers have recognized the overlap issues that were exacerbated in multi-treatment comparisons. \citet{li2019propensity} proposed overlap weights to prevent trimming due to extreme estimated propensity scores and to achieve covariate balancing.

In this paper, we propose a generalized doubly-robust cross-fit (GCF) estimation method for multiple treatment comparisons and derive its corresponding asymptotic properties including simultaneous confidence intervals, which generalizes the cross-fit estimators for binary treatments \citep{chernozhukov2018double, newey2018cross} that can be used in conjunction with more flexible machine learning techniques to fit nuisance parameters. The GCF estimator has doubly robust property that gives double protection on model misspecification, and converges to the true causal effects at $\sqrt{n}$-convergence rate. Also, the asymptotic performance of the derived simultaneous confidence intervals is optimal.


\section{Methodology}
\subsection{Model Specification and Assumptions}
The statistical settings for multi-treatment framework is adopted from \citet{imbens2000role}, which generalizes the causal effects defined through potential outcomes model \citep{neyman1923application, rubin1974estimating}. Suppose that $Z_i \in \mathcal{S} = \{ 1,...,J \}$, $i=1,...,N$,  denotes treatment level for each unit $i$. Each unit is associated with $J$ potential outcomes, $Y_i(j)$, and an indicator vector $\bm{D}_i = (D_{i1},...,D_{iJ})$, where $D_{ij} =  \mathbbm{1}\{ Z_i= j \}$. The observed data is $\big(Y_i^{obs}, Z_i, \bm{X}_i\big)\in \mathcal{R} \times \{0,1\} \times \mathcal{X}$, where $\bm{X}_i=(X_{i1},...,X_{ip})^T$ is a $p$-dimensional covariates and $Y_i^{obs}=Y_i(Z_i)$ is the observed outcome of interests by assuming Stable Unit Treatment Value Assumption (SUTVA). In general, SUTVA is a fundamental assumption required for potential outcomes framework, which assumes consistency and no interference such that $Y_i^{obs}=Y_i(\bm{Z})$ and $Y_i(\bm{z})=Y_i(z_i)$, where $\bm{Z}=(Z_1,...,Z_n)^T$ is the column vector of treatment assignment. \citet{imbens2000role} also defines the generalized propensity score (GPS): 
\begin{equation*}
    e_j(\bm{x}) = P(Z_i=j|\bm{X}_i=\bm{x}),
\end{equation*}
where $\sum_{j=1}^{J}e_j(\bm{x})=1$ for each unit. 

In order to identify causal effects, the following two assumptions are made \citep{rosenbaum1983central, imbens2000role}:

\begin{assumption}\label{as:a1}
Positivity (Overlap): $\xi < \mathbb{P}(Z_i=j|\bm{X}) < 1-\xi$, for some $\xi>0$ and for all $\forall$ $j$.
\end{assumption}

\begin{assumption}\label{as:a2}
Weak unconfoundedness: $Y(j) \perp \mathbbm{1} \{Z=j\} |\bm{X}$, for all $j \in \mathcal{S}$
\end{assumption}


The average treatment effect (ATE) between treatment $j$ and $j'$ are defined as:
\begin{equation}\label{rawATE}
    \tau_{ATE_{j,j'}} \equiv \mathbb{E}[Y_i(j)-Y_i(j')], \text{ for } j \neq j' \text{ and } j,j' \in \{1,...,J\},
\end{equation}

\citet{imbens2000role} shows that under the assumption of SUTVA, positivity, and weak unconfoundedness, we have the following equation:
\begin{equation*}
    \mathbb{E}\big[ Y_i(j)|\bm{X} \big] = \mathbb{E}\big[Y_i^{obs}| Z_i=j, \bm{X} \big],
\end{equation*}
and hence the ATE between treatment $j$ and $j'$ in Equation (\ref{rawATE}) can be re-written as: 
\begin{equation}\label{ATE}
   \tau_{ATE_{j,j'}} = \tau_{ATE_j} - \tau_{ATE_{j'}}, \text{ for } j \neq j' \text{ and } j,j' \in \{1,...,J\},
\end{equation} 
where $\tau_{ATE_j} = \mathbb{E}_{\bm{X}}\big\{ \mathbb{E}[Y_i^{obs}|Z_i=j, \bm{X}] \big\}$. Intuitively, for treatment group $s$ such that $s \neq j$ or $j'$, $Y(s)$ plays no role in estimating the average treatment effect between treatments $j$ and $j'$, which is essentially the concept of Missing At Random (MAR) \citep{rubin1976inference}. In the following, we are going to propose a doubly robust estimator  of the ATE in (\ref{ATE}) with cross-fitting in the multiple treatment framework. 

The samples $\Omega=\{1,...,N\}$ can be partitioned into $K$ folds $E=\{I_1,...,I_k\}$ such that $\cup_{k=1}^{K}I_k=\Omega$, $I_k \cap I_{k'} = \emptyset$ when $k \neq k'$. For $k=1,...,K$,  $I_k^c= \Omega \setminus I_k$. The ATE between treatment $j$ and $j'$ can be estimated by using the generalized cross-fitting estimator (GCF) generalized from the doubly robust estimator with cross-fitting for binary treatment:

\begin{equation}\label{GDCDR}
\hat{\tau}_{GCF, j,j'} = \sum_{k=1}^{K} \frac{|I_k|}{N} \hat{\tau}^{I_k}_{j,j'}, \text{ for } j \neq j' \text{ and } j,j' \in \{1,...,J\},
\end{equation}
where
\begin{align}\label{tauIK}
\hat{\tau}^{I_k}_{j,j'} = &\frac{1}{|I_k|} \sum_{i \in I_k} \Bigg[  \hat{\mu}_j^{I_k^c}(\bm{X}_i) - \hat{\mu}_{j'}^{I_k^c}(\bm{X}_i)
+ 
\mathbbm{1}\{ Z_i=j\}  \frac{Y_i^{obs}-\hat{\mu}_j^{I_k^c}(\bm{X}_i) }{\hat{e}_j^{I_k^c}(\bm{X}_i)} \nonumber \\ 
 &- \mathbbm{1}\{ Z_i=j' \}  \frac{Y_i^{obs} - \hat{\mu}_{j'}^{I_k^c}(\bm{X}_i) }{1- \hat{e}_{j'}^{I_k^c}(\bm{X}_i)} \Bigg]. 
\end{align}
Here $\hat{\mu}_j^{I_k^c}(\bm{X}_i)$ and $\hat{e}_j^{I_k^c}(\bm{X}_i)$ denote the estimated outcome regression and propensity scores correspondingly excluding the k-th fold. When $|I_k|=\frac{N}{K}$, the estimator $\hat{\tau}_{GCF, j,j'}$ in equation (\ref{GDCDR}) can be simplified as $\hat{\tau}_{GCF, j,j'} = \frac{1}{K}  \sum_{k=1}^{K} \hat{\tau}^{I_k}$. Equivalently, the GCF estimator $\hat{\tau}_{GCF, j,j'}$ can be re-written as 
\begin{equation*}
\hat{\tau}_{GCF, j,j'} = \hat{\tau}_{GCF, j} - \hat{\tau}_{GCF, j'},
\end{equation*}
where $\hat{\tau}_{GCF, j}= \sum_{k=1}^{K} \frac{|I_k|}{N} \hat{\tau}^{I_k}_{j}$ and $\hat{\tau}^{I_k}_{j}= \frac{1}{|I_k|} \sum_{i \in I_k} \Big[  \hat{\mu}_j^{I_k^c}(\bm{X}_i) +
\mathbbm{1}\{ Z_i=j\}  \frac{Y_i^{obs}-\hat{\mu}_j^{I_k^c}(\bm{X}_i) }{\hat{e}_j^{I_k^c}(\bm{X}_i)} \Big]$.

The concept of cross-fitting was originally used in estimating nuisance parameters through i.i.d subsamples \citep{hajek1962asymptotically, chernozhukov2018double, newey2018cross}, which is similar to the idea of cross-validation \citep{zheng2011cross} in machine learning literature. Cross-fitting plays an important role in constructing asymptotic linear expansion \citep{van2000asymptotic} of the GCF estimator so that the semiparametric inference theories can be applied to prove its asymptotic normality, and the asymptotic confidence intervals can be constructed based on this result. Theoretical semiparametric inference gives elegant theories in analyzing the asymptotic behaviors of regular estimators that can be written as an asymptotic linear expansion as follows \citet{tsiatis2007semiparametric}:
\begin{equation}\label{asymptotic_linear}
    \sqrt{n}(\hat{\xi} - \xi) = \frac{1}{\sqrt{n}} \sum_{i=1}^{n} \psi_{\xi} (\bm{O}_i) + R_n,
\end{equation}
where $R_n$ is the residual term that converges to zero in probability, and $\psi_{\xi}(\mathcal{P})$ is the influence function with zero mean and finite variance. The functional form of nuisance parameter $\xi(\mathcal{P})$ is computed based on i.i.d sample $\bm{O}_1,...,\bm{O}_n$ from a distribution $\mathcal{P}$. When a regular estimator admits such asymptotic linearity, the results of asymptotic normality, asymptotic efficiency follows. In the following, we are going to demonstrate how cross-fitting contributes in establishing asymptotic linear expansion when more flexible models are used in estimating nuisance parameters.

Without using cross-fitting, the generalized AIPW estimator $\hat{\tau}_{GAIPW,j,j'}$ \citep{tu2013using} is defined as:
\begin{equation}\label{GAIPW}
\hat{\tau}_{GAIPW,j,j'} = \hat{\tau}_{GAIPW,j} - \hat{\tau}_{GAIPW,j'},
\end{equation}
where $\hat{\tau}_{GAIPW,j} = \frac{1}{N}\sum_{i=1}^{N}\Big[ \hat{\mu}_j(\bm{X}_i) + \mathbbm{1}\{ Z_i=j\}  \frac{Y_i^{obs}-\hat{\mu}_j(\bm{X}_i)   }{ \hat{e}_j(\bm{X}_i) }  \Big]$. The residual term $R_N$ of the estimator $\hat{\tau}_{GAIPW,j}$ is
\begin{equation}\label{residual}
R_N = \frac{1}{\sqrt{N}}\sum_{i=1}^{N}\Big[ g_j\Big(\bm{O}_i;\hat{\mu}_j(\bm{X}_i), \hat{e}_j(\bm{X}_i) \Big) - \hat{g}_j\Big(\bm{O}_i; \mu_j(\bm{X}_i), e_j(\bm{X}_i) \Big)  \Big],
\end{equation}
where $g_j\Big(\bm{O}_i; \mu_j(\bm{X}_i), e_j(\bm{X}_i) \Big)=\mu_j (\bm{X}_i) + \mathbbm{1}\{ Z_i=j\}  \frac{Y_i^{obs}-\mu_j(\bm{X}_i)   }{ e_j(\bm{X}_i) }$ and $ \hat{g}_j\Big(\bm{O}_i;\hat{\mu}_j(\bm{X}_i), \hat{e}_j(\bm{X}_i) \Big)$ is the plug-in estimator using $\hat{\mu}_j(\bm{X}_i)$ and $\hat{e}_j(\bm{X}_i)$. The asymptotic results followed by asymptotic linear expansion can not be directly applied to $\hat{\tau}_{GAIPW,j,j'}$ estimator since it is not a sum of i.i.d terms in Equation (\ref{residual}). One can make assumptions on the complexity of the nuisance parameters $\hat{\mu}_j(\cdot)$ and $\hat{e}_j(\cdot)$ to be within Donsker class \citep{van2000asymptotic} to satisfy asymptotic linearity. However, for flexible machine learning models, it can be hard to prove the model is from Donsker class. As an alternative, cross-fitting technique provides transparent assumptions and simplifies the proof of asymptotic linearity. The GCF estimator estimates the nuisance parameters on the out-of-fold dataset, and by conditioning on that, the summand becomes i.i.d terms  \citep{chernozhukov2018double, newey2018cross, zivich2021machine}. Therefore, the use of cross-fitting allows the use of more flexible models in estimating nuisance parameters and still have theoretical guarantees in these asymptotic results. The detailed assumptions and asymptotic properties of the estimator are discussed in Section 3.2.2.


Furthermore, we propose the $(1-\alpha)$ Wald-type \citep{yang2016propensity} simultaneous confidence interval for the GCF estimator using Bonferroni correction \citep{dunn1961multiple}, defined as follows:
\begin{equation}\label{eq:var_wald}
\Bigg[ \hat{\tau}_{GCF, j,j'}  \pm \frac{\phi^{-1}(1-\frac{\alpha}{2 \cdot {J \choose 2}})}{\sqrt{N}}\sqrt{\hat{V}ar(\hat{\tau}_{GCF, j,j'})} \Bigg], \text{ for } j \neq j' \text{ and } j,j' \in \{1,...,J\},
\end{equation}  
where
\begin{equation}
\begin{split}
\hat{V}ar(\hat{\tau}_{GCF, j,j'}) &= \frac{1}{N-1}\sum_{k=1}^K\sum_{i \in I_k}\Bigg( \hat{\mu}_j^{I_k^c}(\bm{X}_i) - \hat{\mu}_{j'}^{I_k^c}(\bm{X}_i)  \nonumber\\
& + 
\mathbbm{1}\{ Z_i=j\}  \frac{Y_i^{obs}-\hat{\mu}_j^{I_k^c}(\bm{X}_i) }{\hat{e}_j^{I_k^c}(\bm{X}_i)} 
 - \mathbbm{1}\{ Z_i=j' \}  \frac{Y_i^{obs} - \hat{\mu}_{j'}^{I_k^c}(\bm{X}_i) }{1- \hat{e}_j^{I_k^c}(\bm{X}_i)} - \hat{\tau}_{GCF, j,j'}  \Bigg)^2. 
\end{split}
\end{equation}

With asymptotic normality holds, one can use this optimal confidence interval to quantify the uncertainty of the GCF estimator. Next, we are going to show in Section 3.2.2 that under certain assumptions, this GCF estimator achieves good asymptotic results.

\subsection{Asymptotic Results for the GCF}
In this section, we exam the asymptotic behavior of the GCF estimator. The Theorem \ref{theorem:GDCDR} and Theorem \ref{theorem: var} show that under certain conditions, the GCF estimator is $\sqrt{n}$-consistent, asymptotic efficient, and asymptotic normal. Therefore, the confidence intervals constructed based on these asymptotic results are optimal.
\begin{theorem}\label{theorem:GDCDR}
With i.i.d samples and Assumptions \ref{as:a1}, \ref{as:a2}, if, $\forall$ $k=1,...,K$, and $\forall$ $j=1,...,J$,
\begin{enumerate}
    \item $\sup_{x \in \mathcal{X}}\big| \hat{\mu}_j^{I_k^c}(\bm{x}) - \mu_j(\bm{x}) \big| \overset{p}{\rightarrow}0$

     \item $\sup_{x \in \mathcal{X}}\big| \hat{e}_j^{I_k^c}(\bm{x}) - e_j(\bm{x}) \big| \overset{p}{\rightarrow}0$
     
     \item $\sqrt{n}\cdot MSE\big[ \hat{\mu}_j^{I_k^c}(\bm{x})  \big] \cdot MSE\big[ \hat{e}_j^{I_k^c}(\bm{x})  \big] \overset{p}{\rightarrow}0$
\end{enumerate}

then $\hat{\tau}_{GCF,j,j'}$ is $\sqrt{n}$-consistent.
\end{theorem}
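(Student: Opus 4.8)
The plan is to establish the asymptotic linear expansion of Equation~(\ref{asymptotic_linear}) for $\hat{\tau}_{GCF,j,j'}$ with a remainder satisfying $R_n \overset{p}{\rightarrow} 0$; $\sqrt{n}$-consistency then follows because the leading term is an average of i.i.d. mean-zero influence functions and is $O_p(1)$ by the central limit theorem. Since $\hat{\tau}_{GCF,j,j'} = \hat{\tau}_{GCF,j} - \hat{\tau}_{GCF,j'}$ and $\tau_{ATE_{j,j'}} = \tau_{ATE_j} - \tau_{ATE_{j'}}$, it suffices to prove the expansion for a single index $\hat{\tau}_{GCF,j}$. I would take the oracle score $g_j(\bm{O}_i;\mu_j,e_j)$ from Equation~(\ref{residual}) as the influence function and note that $\mathbb{E}[g_j(\bm{O}_i;\mu_j,e_j)] = \tau_{ATE_j}$ under Assumptions~\ref{as:a1}--\ref{as:a2} (this is exactly the identification result already quoted from \citet{imbens2000role}). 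Consequently $\sqrt{n}(\hat{\tau}_{GCF,j}-\tau_{ATE_j})$ splits into this CLT term plus a remainder $R_n = \frac{1}{\sqrt{n}}\sum_{k=1}^K\sum_{i\in I_k}\Delta_i$, where $\Delta_i = g_j(\bm{O}_i;\hat\mu_j^{I_k^c},\hat e_j^{I_k^c}) - g_j(\bm{O}_i;\mu_j,e_j)$.

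The core of the argument is the fold-by-fold treatment of $R_n$. The crucial feature of cross-fitting is that, conditional on the out-of-fold sample $I_k^c$, the fitted functions $\hat\mu_j^{I_k^c}$ and $\hat e_j^{I_k^c}$ are fixed and independent of the in-fold observations $\{\bm{O}_i : i\in I_k\}$, which remain i.i.d. I would decompose each $\Delta_i$ into a centered stochastic part $\Delta_i - \mathbb{E}[\Delta_i\mid \bm{X}_i, I_k^c]$ and a conditional-bias part $\mathbb{E}[\Delta_i\mid \bm{X}_i, I_k^c]$. For the stochastic part, conditioning on $I_k^c$ gives conditional mean zero and a conditional variance bounded by $\frac{|I_k|}{n}\mathbb{E}[\Delta_i^2\mid I_k^c]$; the uniform consistency in conditions 1 and 2, together with positivity (Assumption~\ref{as:a1}, which bounds $1/\hat e_j^{I_k^c}$ away from $\infty$), forces $\mathbb{E}[\Delta_i^2\mid I_k^c]\overset{p}{\rightarrow}0$, so a conditional Chebyshev argument makes this term $o_p(1)$.

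The bias part is where double robustness and condition 3 enter. Writing $\delta_\mu = \hat\mu_j^{I_k^c}(\bm{X}_i)-\mu_j(\bm{X}_i)$, $\delta_e = \hat e_j^{I_k^c}(\bm{X}_i)-e_j(\bm{X}_i)$, and $\epsilon_i = Y_i^{obs}-\mu_j(\bm{X}_i)$, a short algebraic manipulation of $\Delta_i$ combined with the two identities $\mathbb{E}[\mathbbm{1}\{Z_i=j\}\mid \bm{X}_i]=e_j(\bm{X}_i)$ and $\mathbb{E}[\mathbbm{1}\{Z_i=j\}\epsilon_i\mid \bm{X}_i]=0$ (the latter following from weak unconfoundedness and the identification $\mu_j(\bm{x})=\mathbb{E}[Y^{obs}\mid Z=j,\bm{x}]$) collapses the conditional mean to the single product term $\mathbb{E}[\Delta_i\mid \bm{X}_i,I_k^c]=\delta_\mu\,\delta_e/\hat e_j^{I_k^c}(\bm{X}_i)$. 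Thus the first-order errors cancel and only the product of the two nuisance errors survives. By Cauchy--Schwarz and positivity, the fold-$k$ bias contribution $\frac{1}{\sqrt{n}}\sum_{i\in I_k}\mathbb{E}[\Delta_i\mid \bm{X}_i,I_k^c]$ is, conditionally on $I_k^c$, of order $\sqrt{n}\,\|\hat\mu_j^{I_k^c}-\mu_j\|_{L^2}\,\|\hat e_j^{I_k^c}-e_j\|_{L^2}$, which vanishes by condition 3 (its fluctuation around this conditional mean being $o_p(1)$ by the same variance bound as before). Summing both parts over the finite number $K$ of folds yields $R_n\overset{p}{\rightarrow}0$ and hence $\sqrt{n}(\hat\tau_{GCF,j}-\tau_{ATE_j})=O_p(1)$; subtracting the $j'$ expansion finishes the proof.

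I expect the main obstacle to be the bias step: establishing the exact cancellation that leaves only the product $\delta_\mu\delta_e$, since this is what reduces the required accuracy from first order to the much weaker second-order product condition 3, and it rests entirely on the orthogonality identity $\mathbb{E}[\mathbbm{1}\{Z_i=j\}\epsilon_i\mid\bm{X}_i]=0$. A secondary subtlety is making the conditioning-on-$I_k^c$ arguments rigorous, namely passing from conditional $o_p$/$O_p$ statements to unconditional ones; this is routine but must invoke the cross-fitting independence carefully, as it is precisely what lets the fitted nuisances be treated as fixed and thereby avoids the Donsker-class entropy conditions that $\hat{\tau}_{GAIPW,j,j'}$ would otherwise require.
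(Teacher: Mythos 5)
Your proposal is correct and follows essentially the same route as the paper's own proof: reduce the problem to comparing the cross-fit estimator with the oracle (true-nuisance) average so that $\sqrt{n}$-consistency follows from the CLT for the oracle term, use the conditional independence created by cross-fitting, kill the first-order error terms via the identities $\mathbb{E}\big[\mathbbm{1}\{Z_i=j\}\mid\bm{X}_i\big]=e_j(\bm{X}_i)$ and $\mathbb{E}\big[\mathbbm{1}\{Z_i=j\}\big(Y_i^{obs}-\mu_j(\bm{X}_i)\big)\mid\bm{X}_i\big]=0$, and control the surviving product-of-errors term by Cauchy--Schwarz together with condition 3, with Chebyshev-type conditional variance bounds (using sup-norm consistency and positivity) for the centered parts. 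The only distinction is organizational: the paper works with the explicit three-term algebraic decomposition (\ref{lam:equ1-1})--(\ref{lam:equ1-3}) and bounds the product term wholesale, whereas you first project each $\Delta_i$ onto its conditional mean, which isolates exactly the same bias term $\delta_\mu\delta_e/\hat{e}_j^{I_k^c}$ and the same two mean-zero fluctuations.
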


\begin{proof}
First we see that to show $\sqrt{n}(\hat{\tau}_{GCF,j,j'} - \tau_{GCF, j,j'}) \overset{p}{\rightarrow}0$ is equivalent to show that
\begin{align}\label{lam:equ1}
\sqrt{n}\frac{1}{|I_k|} \sum_{i \in I_k}\Big[ \hat{\mu}_j^{I_k^c}(\bm{X}_i) &+ 
\mathbbm{1}\{ Z_i=j\}  \frac{Y_i^{obs}-\hat{\mu}_j^{I_k^c}(\bm{X}_i) }{\hat{e}_j^{I_k^c}(\bm{X}_i)} \nonumber \\ &- \mu_j(\bm{X}_i) - 
\mathbbm{1}\{ Z_i=j\}  \frac{Y_i^{obs}-\mu_j(\bm{X}_i) }{e_j(\bm{X}_i)}    \Big] \overset{p}{\rightarrow}0.
\end{align}
The terms on the left-hand side of (\ref{lam:equ1}) can be decomposed into three parts as follows.
\begin{align}\label{lam:equ1-1}
   \frac{1}{|I_k|} \sum_{i \in I_k} \Bigg[ \hat{\mu}_j^{I_k^c}(\bm{X}_i) &+ 
\mathbbm{1}\{ Z_i=j\}  \frac{Y_i^{obs}-\hat{\mu}_j^{I_k^c}(\bm{X}_i) }{\hat{e}_j^{I_k^c}(\bm{X}_i)} - \mu_j(\bm{X}_i) - 
\mathbbm{1}\{ Z_i=j\}  \frac{Y_i^{obs}-\mu_j(\bm{X}_i) }{e_j(\bm{X}_i)} \Bigg] \nonumber\\
    =&- \frac{1}{|I_k|} \sum_{i \in I_k} \Bigg[ \big( \hat{\mu}_j^{I_k^c}(\bm{X}_i) -\mu_j(\bm{X}_i) \big) \Bigg( \frac{\mathbbm{1}\{ Z_i=j\}}{\hat{e}_j^{I_k^c}(\bm{X}_i)}-\frac{\mathbbm{1}\{ Z_i=j\}}{e_j(\bm{X}_i)} \Bigg) \Bigg]\\\label{lam:equ1-2}
    &+\frac{1}{|I_k|} \sum_{i \in I_k} \Bigg[ \big( Y_i^{obs} -\mu_j(\bm{X}_i) \big) \Bigg( \frac{\mathbbm{1}\{ Z_i=j\}}{\hat{e}_j^{I_k^c}(\bm{X}_i)}-\frac{\mathbbm{1}\{ Z_i=j\}}{e_j(\bm{X}_i)} \Bigg) \Bigg] \\\label{lam:equ1-3}
     &+ \frac{1}{|I_k|} \sum_{i \in I_k} \Bigg[ \big( \hat{\mu}_j^{I_k^c}(\bm{X}_i) -\mu_j(\bm{X}_i) \big) \Bigg( 1-\frac{\mathbbm{1}\{ Z_i=j\}}{e_j(\bm{X}_i)} \Bigg) \Bigg].
\end{align}
Now we are going to show that each of the three parts, (\ref{lam:equ1-1} -- \ref{lam:equ1-3}), on the right-hand side of equality above converges to zero in probability.

For (\ref{lam:equ1-1}), using Cauchy-Schwarz inequality, along with the assumptions in the theorem, as well as the positivity Assumption 1, we can show that 
\begin{flalign*}
\frac{1}{|I_k|} &\sum_{i \in I_k} \Bigg[ \big( \hat{\mu}_j^{I_k^c}(\bm{X}_i) -\mu_j(\bm{X}_i) \big) \Bigg( \frac{\mathbbm{1}\{ Z_i=j\}}{\hat{e}_j^{I_k^c}(\bm{X}_i)}-\frac{\mathbbm{1}\{ Z_i=j\}}{e_j(\bm{X}_i)} \Bigg) \Bigg]\\
&\leq \sqrt{\frac{1}{|I_k|} \sum_{i \in I_k} \big( \hat{\mu}_j^{I_k^c}(\bm{X}_i) -\mu_j(\bm{X}_i) \big)^2 } \cdot \sqrt{\frac{1}{|I_k|} \sum_{i \in I_k} \Bigg( \frac{\mathbbm{1}\{ Z_i=j\}}{\hat{e}_j^{I_k^c}(\bm{X}_i)}-\frac{\mathbbm{1}\{ Z_i=j\}}{e_j(\bm{X}_i)} \Bigg)^2 } \\
& \leq \sqrt{\frac{1}{|I_k|} \sum_{i \in I_k} \big( \hat{\mu}_j^{I_k^c}(\bm{X}_i) -\mu_j(\bm{X}_i) \big)^2 } \cdot \sqrt{\frac{1}{|I_k|} \sum_{i \in I_k}  \Bigg( \frac{e_j(\bm{X}_i) - \hat{e}_j^{I_k^c} }{\hat{e}_j^{I_k^c}(\bm{X}_i) \cdot e_j(\bm{X}_i) } \Bigg)^2 } \\
& \leq \sqrt{\frac{1}{|I_k|} \sum_{i \in I_k} \big( \hat{\mu}_j^{I_k^c}(\bm{X}_i) -\mu_j(\bm{X}_i) \big)^2 } \cdot \sqrt{\frac{1}{|I_k|} \sum_{i \in I_k}  \Bigg( \frac{1}{\xi^2} \Bigg)^2 \Bigg( e_j(\bm{X}_i) - \hat{e}_j^{I_k^c}(\bm{X}_i) \Bigg)^2}
= o_P\Bigg(\frac{1}{\sqrt{n}}\Bigg). 
\end{flalign*}
Note that $\hat{e}_j^{(-t(i))}(\bm{X}_i) > \xi$ for sufficiently large $n$ (by positivity Assumption 1 and the sup-norm consistency of the estimated propensity scores, i.e. $\sup_{x \in \mathcal{X}}\big| \hat{e}_j^{I_k^c}(\bm{x}) - e_j(\bm{x}) \big| \overset{p}{\rightarrow}0$). Additionally, $\frac{1}{|I_k|} \sum_{i \in I_k} \big( \hat{\mu}_j^{I_k^c}(\bm{X}_i) -\mu_j(\bm{X}_i) \big)^2$ and $\frac{1}{|I_k|} \sum_{i \in I_k} \big( \hat{e}_j^{I_k^c}(\bm{X}_i) - e_j^{I_k^c}(\bm{X}_i) \big)^2$ converge in probability to $MSE\big[\hat{\mu}_j^{I_k^c}(\bm{x}) \big]$ and $MSE\big[\hat{e}_j^{I_k^c}(\bm{x}) \big]$ correspondingly by law of large number. And because $\sqrt{MSE\big[\hat{\mu}_j^{I_k^c}(\bm{x})  \big] \cdot MSE\big[ \hat{e}_j^{I_k^c}(\bm{x})  \big]}=o_P\Big(\frac{1}{\sqrt{n}}\Big)$ by assumption c, the (\ref{lam:equ1-1}) is $o_P\Big(\frac{1}{\sqrt{n}}\Big)$.

To show that the (\ref{lam:equ1-2}) and (\ref{lam:equ1-3}) converge to zero in probability, first notice that when conditioning on $I_k^c$, both $\hat{e}_j^{I_k^c}(\bm{X}_i)$ and $\hat{\mu}_j^{I_k^c}(\bm{X}_i)$ are deterministic, or in other words, $\hat{e}_j^{I_k^c}(\bm{X}_i)$ or $\hat{\mu}_j^{I_k^c}(\bm{X}_i)$ will not introduce any dependency between samples in $I_k$, and hence we have
\begin{align}\label{eq:indep}
\big( Y_i^{obs} -\mu_j(\bm{X}_i) \big) \left( \frac{\mathbbm{1}\{ Z_i=j\}}{\hat{e}_j^{I_k^c}(\bm{X}_i)}-\frac{\mathbbm{1}\{ Z_i=j\}}{e_j(\bm{X}_i)} \right) &\text{ are i.i.d., } \forall i \in I_k,\nonumber \\
\big( \hat{\mu}_j^{I_k^c}(\bm{X}_i) -\mu_j(\bm{X}_i) \big) \left( 1-\frac{\mathbbm{1}\{ Z_i=j\}}{e_j(\bm{X}_i)} \right) &\text{ are i.i.d., } \forall i \in I_k. 
\end{align}
Additionally, we may need the following results:
\begin{equation}\label{eq:E1}
    \mathbb{E}\left[  \frac{1}{|I_k|} \sum_{i \in I_k}  \big( Y_i^{obs} -\mu_j(\bm{X}_i) \big) \left( \frac{\mathbbm{1}\{ Z_i=j\}}{\hat{e}_j^{I_k^c}(\bm{X}_i)}-\frac{\mathbbm{1}\{ Z_i=j\}}{e_j(\bm{X}_i)} \right)   \Bigg| I_k^c \right]=0,\end{equation} and 
\begin{equation}\label{eq:E2}
\mathbb{E}\left[ \frac{1}{|I_k|} \sum_{i \in I_k} \big( \hat{\mu}_j^{I_k^c}(\bm{X}_i) -\mu_j(\bm{X}_i) \big) \left( 1-\frac{\mathbbm{1}\{ Z_i=j\}}{e_j(\bm{X}_i)} \right) \Bigg| I_k^c \right]=0.
\end{equation}

Notice that since \begin{flalign*}
\mathbb{E}\Bigg[\big( Y_i^{obs} &-\mu_j(\bm{X}_i) \big) \big( \frac{\mathbbm{1}\{ Z_i=j\}}{\hat{e}_j^{I_k^c}(\bm{X}_i)}-\frac{\mathbbm{1}\{ Z_i=j\}}{e_j(\bm{X}_i)} \big) \big| \bm{X}_i, I_k^c \Bigg]\\
= &\Bigg[ \frac{\mathbb{E}(Z_i=j|\bm{X}_i, I_k^c)}{\hat{e}_j^{I_k^c}(\bm{X}_i)}-\frac{\mathbb{E}(Z_i=j|\bm{X}_i, I_k^c)}{e_j(\bm{X}_i)} \Bigg]\cdot \bigg[ \mathbb{E}(Y_i^{obs}|\bm{X}_i) - \mu_j(\bm{X}_i) \bigg]&\\
= &\Bigg[ \frac{\mathbb{E}(Z_i=j|\bm{X}_i, I_k^c)}{\hat{e}_j^{I_k^c}(\bm{X}_i)}-\frac{\mathbb{E}(Z_i=j|\bm{X}_i, I_k^c)}{e_j(\bm{X}_i)} \Bigg] \cdot \bigg[ \mathbb{E}\{Y_i(j)|Z_i=j,\bm{X}_i \} - \mu_j(\bm{X}_i) \bigg]\\
= &\Bigg[ \frac{\mathbb{E}(Z_i=j|\bm{X}_i, I_k^c)}{\hat{e}_j^{I_k^c}(\bm{X}_i)}-\frac{\mathbb{E}(Z_i=j|\bm{X}_i, I_k^c)}{e_j(\bm{X}_i)} \Bigg] \cdot 0 =0,
\end{flalign*} 
then (\ref{eq:E1}) can be written as
\begin{flalign*}
\mathbb{E}\Bigg[  \frac{1}{|I_k|} &\sum_{i \in I_k}  \bigg( Y_i^{obs} -\mu_j(\bm{X}_i) \bigg) \left( \frac{\mathbbm{1}\{ Z_i=j\}}{\hat{e}_j^{I_k^c}(\bm{X}_i)}-\frac{\mathbbm{1}\{ Z_i=j\}}{e_j(\bm{X}_i)} \right)   \Bigg| I_k^c \Bigg]\\
 = &\mathbb{E}\Bigg[ \mathbb{E}\bigg[  \frac{1}{|I_k|} \sum_{i \in I_k}  \bigg( Y_i^{obs} -\mu_j(\bm{X}_i) \bigg) \Bigg( \frac{\mathbbm{1}\{ Z_i=j\}}{\hat{e}_j^{I_k^c}(\bm{X}_i)}-\frac{\mathbbm{1}\{ Z_i=j\}}{e_j(\bm{X}_i)} \Bigg)   \Bigg| \bm{X}_i, I_k^c \bigg] \Bigg]&\\
= &\mathbb{E}\Bigg[   \frac{1}{|I_k|}  \sum_{i \in I_k} \mathbb{E}\bigg[ \bigg(Y_i^{obs} -\mu_j(\bm{X}_i) \bigg) \Bigg( \frac{\mathbbm{1}\{ Z_i=j\}}{\hat{e}_j^{I_k^c}(\bm{X}_i)}-\frac{\mathbbm{1}\{ Z_i=j\}}{e_j(\bm{X}_i)} \Bigg)   \Bigg| \bm{X}_i, I_k^c \bigg] \Bigg] =0.
\end{flalign*}
Similarly, since 
\begin{flalign*}
\mathbb{E}\Bigg[ &\bigg( \hat{\mu}_j^{I_k^c}(\bm{X}_i) -\mu_j(\bm{X}_i) \bigg) \Bigg( 1-\frac{\mathbbm{1}\{ Z_i=j\}}{e_j(\bm{X}_i)} \Bigg) \Bigg| \bm{X}_i, I_k^c \Bigg]\\
&= \bigg( \hat{\mu}_j^{I_k^c}(\bm{X}_i) -\mu_j(\bm{X}_i) \bigg)   \frac{e_j(\bm{X}_i)-\mathbb{E}\Big[ \mathbbm{1}\{ Z_i=j\} \big| \bm{X}_i, I_k^c \Big]}{e_j(\bm{X}_i)} \\
&= \bigg( \hat{\mu}_j^{I_k^c}(\bm{X}_i) -\mu_j(\bm{X}_i) \bigg)   \frac{e_j(\bm{X}_i)-\mathbb{P}(Z_i=j \big| \bm{X}_i, I_k^c)}{e_j(\bm{X}_i)} = \bigg( \hat{\mu}_j^{I_k^c}(\bm{X}_i) -\mu_j(\bm{X}_i) \bigg)\cdot 0=0,
\end{flalign*}
we derive
\begin{flalign*}
\mathbb{E}\Bigg[ &\frac{1}{|I_k|} \sum_{i \in I_k} \bigg( \hat{\mu}_j^{I_k^c}(\bm{X}_i) -\mu_j(\bm{X}_i) \bigg) \bigg( 1-\frac{\mathbbm{1}\{ Z_i=j\}}{e_j(\bm{X}_i)} \bigg) \Bigg| I_k^c \Bigg]\\
&=\mathbb{E}\Bigg[ \mathbb{E}\bigg[ \frac{1}{|I_k|} \sum_{i \in I_k} \bigg( \hat{\mu}_j^{I_k^c}(\bm{X}_i) -\mu_j(\bm{X}_i) \bigg) \Bigg( 1-\frac{\mathbbm{1}\{ Z_i=j\}}{e_j(\bm{X}_i)} \Bigg) \Bigg| \bm{X}_i, I_k^c \bigg] \Bigg]=0
\end{flalign*}
Hence, (\ref{eq:E2}) holds. 

Now, given $I_k^c$, since the mean of (\ref{lam:equ1-2}) is 0 as in (\ref{eq:E1}), so is its unconditional mean. The variance of (\ref{lam:equ1-2}) thus can be written as
\begin{flalign*}
\mathbb{E}\Bigg[ \Bigg\{  \frac{1}{|I_k|} &\sum_{i \in I_k}  \big( Y_i^{obs} -\mu_j(\bm{X}_i) \big) \Bigg( \frac{\mathbbm{1}\{ Z_i=j\}}{\hat{e}_j^{I_k^c}(\bm{X}_i)}-\frac{\mathbbm{1}\{ Z_i=j\}}{e_j(\bm{X}_i)} \Bigg)   \Bigg\}^2 \Bigg]\\
= &\mathbb{E}\Bigg[  \mathbb{E}\Bigg[ \Bigg\{  \frac{1}{|I_k|} \sum_{i \in I_k}  \big( Y_i^{obs} -\mu_j(\bm{X}_i) \big) \Bigg( \frac{\mathbbm{1}\{ Z_i=j\}}{\hat{e}_j^{I_k^c}(\bm{X}_i)}-\frac{\mathbbm{1}\{ Z_i=j\}}{e_j(\bm{X}_i)} \Bigg)   \Bigg\}^2\Bigg| I_k^c \Bigg] \Bigg]\\
= &\mathbb{E}\Bigg[  Var\Bigg[  \frac{1}{|I_k|} \sum_{i \in I_k}  \big( Y_i^{obs} -\mu_j(\bm{X}_i) \big) \Bigg( \frac{\mathbbm{1}\{ Z_i=j\}}{\hat{e}_j^{I_k^c}(\bm{X}_i)}-\frac{\mathbbm{1}\{ Z_i=j\}}{e_j(\bm{X}_i)} \Bigg)   \Bigg| I_k^c \Bigg]+0 \Bigg].
\end{flalign*}
Since the terms inside the sum above are independent, see (\ref{eq:indep}), the above can be expressed as
\begin{flalign*}
&\frac{1}{|I_k|} \mathbb{E}\Bigg[  Var\Bigg[    \big( Y_i^{obs} -\mu_j(\bm{X}_i) \big) \Bigg( \frac{\mathbbm{1}\{ Z_i=j\}}{\hat{e}_j^{I_k^c}(\bm{X}_i)}-\frac{\mathbbm{1}\{ Z_i=j\}}{e_j(\bm{X}_i)} \Bigg)   \Bigg| I_k^c \Bigg] \Bigg] \\
=&\frac{1}{|I_k|} \mathbb{E}\Bigg[  \mathbb{E}\Bigg[    \big( Y_i^{obs} -\mu_j(\bm{X}_i) \big)^2 \Bigg( \frac{\mathbbm{1}\{ Z_i=j\}}{\hat{e}_j^{I_k^c}(\bm{X}_i)}-\frac{\mathbbm{1}\{ Z_i=j\}}{e_j(\bm{X}_i)} \Bigg)^2   \Bigg| I_k^c \Bigg] - 0 \Bigg],
\end{flalign*}
for a particular $i$. 
Now continue investigating the double expectations above by conditioning on $\bm{X}_i$. It can be expressed as
\begin{flalign}\label{eq:E3}
&\frac{1}{|I_k|} \mathbb{E} \Bigg[ \mathbb{E}\Bigg[  \mathbb{E}\Bigg[    \big( Y_i^{obs} -\mu_j(\bm{X}_i) \big)^2 \Bigg( \frac{\mathbbm{1}\{ Z_i=j\}}{\hat{e}_j^{I_k^c}(\bm{X}_i)}-\frac{\mathbbm{1}\{ Z_i=j\}}{e_j(\bm{X}_i)} \Bigg)^2   \Bigg| \bm{X}_i,I_k^c \Bigg] \Bigg] \Bigg]\nonumber\\
=&\frac{1}{|I_k|} \mathbb{E} \Bigg[ \mathbb{E}\Bigg[  Var(Y_i^{obs}|\bm{X}_i) \cdot  \mathbb{E}\Bigg[  \mathbbm{1}\{ Z_i=j\} \cdot \Bigg( \frac{1}{\hat{e}_j^{I_k^c}(\bm{X}_i)}-\frac{1}{e_j(\bm{X}_i)} \Bigg)^2   \Bigg| \bm{X}_i,I_k^c \Bigg] \Bigg] \Bigg]\nonumber\\
=&\frac{1}{|I_k|} Var(Y_i^{obs}|\bm{X}_i) \cdot \mathbb{E} \Bigg[ \mathbb{E}\Bigg[  \Bigg(\frac{1}{\hat{e}_j^{I_k^c} (\bm{X}_i)}-\frac{1}{e_j(\bm{X}_i)} \Bigg)^2 \cdot \mathbb{E}\big[  \mathbbm{1}\{ Z_i=j\}  \Bigg| \bm{X}_i,I_k^c \big] \Bigg] \Bigg]\nonumber\\
=&\frac{1}{|I_k|} Var(Y_i^{obs}|\bm{X}_i) \cdot \mathbb{E}\Bigg[  \Bigg(\frac{1}{\hat{e}_j^{I_k^c}(\bm{X}_i)}-\frac{1}{e_j(\bm{X}_i)} \Bigg)^2 \cdot e_j(\bm{X}_i) \Bigg].
\end{flalign}
Since $e_j(\bm{X}_i) \cdot \Bigg(\frac{1}{\hat{e}_j^{I_k^c}(\bm{X}_i)}-\frac{1}{e_j(\bm{X}_i)} \Bigg)^2   \leq \frac{1-\xi}{\xi \cdot \xi} \overset{p}{\rightarrow} 0$ by Assumption 1, and $I_k \sim n/K$, then
\begin{flalign*}
&\mathbb{E}\Bigg[ \Bigg\{  \frac{1}{|I_k|} \sum_{i \in I_k}  \big( Y_i^{obs} -\mu_j(\bm{X}_i) \big) \Bigg( \frac{\mathbbm{1}\{ Z_i=j\}}{\hat{e}_j^{I_k^c}(\bm{X}_i)}-\frac{\mathbbm{1}\{ Z_i=j\}}{e_j(\bm{X}_i)} \Bigg)   \Bigg\}^2 \Bigg]\\
=&\frac{1}{|I_k|} Var(Y_i^{obs}|\bm{X}_i) \cdot \mathbb{E} \Bigg[  \Bigg(\frac{1}{\hat{e}_j^{I_k^c}(\bm{X}_i)}-\frac{1}{e_j(\bm{X}_i)} \Bigg)^2 \cdot e_j(\bm{X}_i) \Bigg] = o_P(\frac{1}{n}).
\end{flalign*}
Consequently, the term in (\ref{lam:equ1-2}) converges to zero in probability.

The term of (\ref{lam:equ1-3}) converging to zero in probability can also be proved analogously as that for the term (\ref{lam:equ1-2}). Using (\ref{eq:E2}), the variance of (\ref{lam:equ1-3}) can be expressed, similarly as in (\ref{eq:E3}) 
\begin{flalign}\label{eq:E4}
&\mathbb{E}\big[ \{ \frac{1}{|I_k|} \sum_{i \in I_k}  \big( \hat{\mu}_j^{I_k^c}(\bm{X}_i) -\mu_j(\bm{X}_i) \big) \big( 1-\frac{\mathbbm{1}\{ Z_i=j\}}{e_j(\bm{X}_i)} \big)  \}^2 \big] \nonumber\\
= &\frac{1}{|I_k|}\mathbb{E}\Bigg[\mathbb{E} \Bigg[   \big( \hat{\mu}_j^{I_k^c}(\bm{X}_i) -\mu_j(\bm{X}_i) \big)^2 \Bigg( 1-\frac{\mathbbm{1}\{ Z_i=j\}}{e_j(\bm{X}_i)} \Bigg)^2  \Bigg|\bm{X}_i, I_k^c \Bigg]\Bigg] \nonumber\\
= &\frac{1}{|I_k|}\mathbb{E}\Bigg[\mathbb{E}\Bigg[   \big( \hat{\mu}_j^{I_k^c}(\bm{X}_i) -\mu_j(\bm{X}_i) \big)^2  \cdot \mathbb{E} \Bigg[  1^2-2\frac{\mathbbm{1}\{ Z_i=j\}}{e_j(\bm{X}_i)} + \frac{\mathbbm{1}^2\{ Z_i=j\}}{e_j(\bm{X}_i)} \Bigg| \bm{X}_i, I_k^c \Bigg]\Bigg] \Bigg| I_k^c \Bigg].
\end{flalign}
The inside conditional expectation in (\ref{eq:E4}) can be shown as 
\begin{flalign*}
\mathbb{E} \Bigg[  1^2-2\frac{\mathbbm{1}\{ Z_i=j\}}{e_j(\bm{X}_i)} + \frac{\mathbbm{1}^2\{ Z_i=j\}}{e_j(\bm{X}_i)} \Bigg| \bm{X}_i, I_k^c \Bigg] &= 1-2\frac{\mathbb{P} \big[  Z_i=j \big| \bm{X}_i, I_k^c \big]}{e_j(\bm{X}_i)} + \frac{\mathbb{P} \big[ Z_i=j\big| \bm{X}_i, I_k^c \big]}{e^2_j(\bm{X}_i)} \\
&= -1+ \frac{1}{e_j(\bm{X}_i)} \le \frac{1}{\xi}.
\end{flalign*}
Hence, (\ref{eq:E4}) can be shown as
\begin{flalign*}
\mathbb{E}\Bigg[ \Bigg\{ \frac{1}{|I_k|} \sum_{i \in I_k}  \big( \hat{\mu}_j^{I_k^c}(\bm{X}_i) &-\mu_j(\bm{X}_i) \big) \Bigg( 1-\frac{\mathbbm{1}\{ Z_i=j\}}{e_j(\bm{X}_i)} \Bigg)  \Bigg\}^2 \Bigg] \\
&\leq \frac{1}{\xi|I_k|}\mathbb{E}\Bigg[   \left( \hat{\mu}_j^{I_k^c}(\bm{X}_i) -\mu_j(\bm{X}_i) \right)^2    \Bigg] = o_P\left(\frac{1}{n}\right).
\end{flalign*}
From above, we have shown that each term of summand in Equation (\ref{lam:equ1}) converges in propability to zero, and therefore, $\sqrt{n}(\hat{\tau}_{GCF,j,j'} - \tau_{GCF, j,j'}) \overset{p}{\rightarrow}0$. 
\end{proof}

Generally speaking, the convergence rates are relatively slow when using machine learning methods as plug-in estimators, especially when the true model is a simple linear regression. Theorem \ref{theorem:GDCDR} shows that when the machine learning model can consistently estimate the propensity scores and outcome regressions, and the risk decay converges in probability to zero, the estimator $\hat{\tau}_{GCF,j,j'}$ is not only consistent  but also converges to the true parameter at the fastest rate at $\sqrt{n}$. The sup-norm consistency and risk decay assumptions made in Theorem \ref{theorem:GDCDR} are relatively easy to be satisfied through flexible models. However, the estimator is still sensitive to the fundamental assumption of the potential outcomes framework: the positivity/overlap assumption, since the estimator is weighted by the inverse of the estimated propensity scores. Theorem \ref{theorem: var} computes the asymptotic variance of $\hat{\tau}_{GCF,j,j'}$ that attains the semiparametric efficiency bound for average treatment effect(ATE). 

\begin{theorem}\label{theorem: var}
If the following two conditions hold, 
\begin{enumerate}
\item $\big|Y_i^{obs} \big| \leq M$,
\item $\sup_{x \in \mathcal{X}}\big| \hat{\mu}_j^{I_k^c}(\bm{x}) - \mu_j(\bm{x}) \big| \overset{p}{\rightarrow}0$ and  $\sup_{x \in \mathcal{X}}\big| \hat{e}_j^{I_k^c}(\bm{x}) - e_j(\bm{x}) \big| \overset{p}{\rightarrow}0$ $\forall j$,
\end{enumerate}
then 
\begin{equation}\label{V}
    \hat{V}ar(\hat{\tau}_{GCF, j,j'}) \overset{p}{\rightarrow} V 
= Var\bigg[ \mu_j(\bm{X}_i) - \mu_{j'}(\bm{X}_i) \bigg] + \mathbb{E} \left[ \frac{\sigma_j^2(\bm{X}_i)}{e_j(\bm{X}_i)} \right] + \mathbb{E} \left[ \frac{\sigma_{j'}^2(\bm{X}_i)}{e_{j'}(\bm{X}_i)} \right],
\end{equation}
where
\begin{equation*}
\sigma_j^2(\bm{X}_i)=\mathbb{E}\bigg[ \Big( Y_i^{obs} - \mu_j(\bm{X}_i)  \Big)^2 \bigg| \bm{X}_i \bigg].   \end{equation*}

\end{theorem}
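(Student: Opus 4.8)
The plan is to recognize $\hat{V}ar(\hat{\tau}_{GCF,j,j'})$ as a cross-fitted empirical second moment of a plug-in influence function, to replace the estimated nuisances $\hat{\mu}_j^{I_k^c},\hat{e}_j^{I_k^c}$ and the estimated center $\hat{\tau}_{GCF,j,j'}$ by their population counterparts at an asymptotically vanishing cost, and finally to evaluate the resulting population variance in closed form. Define the oracle influence function
\begin{equation*}
\tilde{\psi}_{j,j'}(\bm{O}_i) = \mu_j(\bm{X}_i) - \mu_{j'}(\bm{X}_i) + \mathbbm{1}\{Z_i=j\}\frac{Y_i^{obs}-\mu_j(\bm{X}_i)}{e_j(\bm{X}_i)} - \mathbbm{1}\{Z_i=j'\}\frac{Y_i^{obs}-\mu_{j'}(\bm{X}_i)}{e_{j'}(\bm{X}_i)},
\end{equation*}
the population analogue of the summand in $\hat{\tau}^{I_k}_{j,j'}$, where the $j'$ denominator is read as $e_{j'}$ to be consistent with the decomposition $\hat{\tau}_{GCF,j,j'}=\hat{\tau}_{GCF,j}-\hat{\tau}_{GCF,j'}$ and with the target $V$. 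Since weak unconfoundedness gives $\mathbb{E}[\tilde{\psi}_{j,j'}]=\tau_{j,j'}$, it suffices to show the reduction $\hat{V}ar(\hat{\tau}_{GCF,j,j'}) = \frac{1}{N}\sum_{i=1}^N(\tilde{\psi}_{j,j'}(\bm{O}_i)-\tau_{j,j'})^2 + o_P(1)$, that this empirical variance converges to $Var[\tilde{\psi}_{j,j'}]$, and that $Var[\tilde{\psi}_{j,j'}]=V$.

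The crux, and the step I expect to be the main obstacle, is the plug-in replacement. Working fold by fold, I would write the centered estimated summand as $\hat{a}_i-\hat{\tau}_{GCF,j,j'}$ with oracle analogue $\tilde{\psi}_{j,j'}(\bm{O}_i)-\tau_{j,j'}$, and exploit $\hat{a}_i^2-\tilde{\psi}_{j,j'}(\bm{O}_i)^2 = (\hat{a}_i-\tilde{\psi}_{j,j'}(\bm{O}_i))(\hat{a}_i+\tilde{\psi}_{j,j'}(\bm{O}_i))$. The difference factor is uniformly $o_P(1)$ by the sup-norm consistency of Condition 2 together with positivity (Assumption \ref{as:a1}), which forces $\hat{e}_j^{I_k^c}>\xi$ for $n$ large so that the inverse-weight perturbations are uniformly small; the sum factor is uniformly bounded by a constant depending only on $M$ (Condition 1) and $\xi$. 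Averaging these bounds, and using $\hat{\tau}_{GCF,j,j'}\overset{p}{\rightarrow}\tau_{j,j'}$ from Theorem \ref{theorem:GDCDR} together with $N/(N-1)\rightarrow1$, yields the stated reduction. The oracle empirical variance then converges to $\mathbb{E}[(\tilde{\psi}_{j,j'}-\tau_{j,j'})^2]=Var[\tilde{\psi}_{j,j'}]$ by the weak law of large numbers, applied within each fold after conditioning on $I_k^c$ exactly as in the i.i.d. argument of Theorem \ref{theorem:GDCDR}.

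It remains to evaluate $Var[\tilde{\psi}_{j,j'}]$. Decompose $\tilde{\psi}_{j,j'} = (\mu_j(\bm{X}_i)-\mu_{j'}(\bm{X}_i)) + B_j - B_{j'}$, where $B_j = \mathbbm{1}\{Z_i=j\}(Y_i^{obs}-\mu_j(\bm{X}_i))/e_j(\bm{X}_i)$. Weak unconfoundedness (Assumption \ref{as:a2}) gives $\mathbb{E}[B_j\mid\bm{X}_i]=\mathbb{E}[Y_i^{obs}-\mu_j(\bm{X}_i)\mid Z_i=j,\bm{X}_i]=0$, so each $B$ term is uncorrelated with the $\bm{X}_i$-measurable difference $\mu_j(\bm{X}_i)-\mu_{j'}(\bm{X}_i)$; moreover $\mathbbm{1}\{Z_i=j\}\mathbbm{1}\{Z_i=j'\}=0$ for $j\neq j'$ makes $B_jB_{j'}=0$ pointwise, hence $Cov(B_j,B_{j'})=0$. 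Therefore $Var[\tilde{\psi}_{j,j'}]=Var[\mu_j(\bm{X}_i)-\mu_{j'}(\bm{X}_i)]+Var[B_j]+Var[B_{j'}]$, and conditioning on $\bm{X}_i$ gives $Var[B_j]=\mathbb{E}[\mathbbm{1}\{Z_i=j\}(Y_i^{obs}-\mu_j(\bm{X}_i))^2/e_j(\bm{X}_i)^2]=\mathbb{E}[\sigma_j^2(\bm{X}_i)/e_j(\bm{X}_i)]$, with positivity ensuring the weight is finite. Assembling the three pieces produces exactly $V$, completing the proof.
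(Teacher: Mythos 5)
Your proposal is correct and follows essentially the same route as the paper's proof: the paper likewise compares the cross-fitted summands $S_i$ with oracle terms $T_i$ built from the true nuisances (reading the $j'$ denominator as $e_{j'}$, exactly as you do), controls $|S_i-T_i|$ uniformly via sup-norm consistency, positivity, and $|Y_i^{obs}|\le M$, applies the law of large numbers to the oracle sample variance, and evaluates the limit $V$ by the same conditional-variance calculation, differing from you only in bookkeeping (a sum-of-squares decomposition of $\frac{1}{N-1}\sum_i(S_i-\bar S)^2$ rather than your difference-of-squares factorization, and centering at sample means rather than at $\tau_{j,j'}$). The one slip is citing Theorem \ref{theorem:GDCDR} for $\hat{\tau}_{GCF,j,j'}\overset{p}{\rightarrow}\tau_{j,j'}$, since that theorem requires the MSE-product rate condition which Theorem \ref{theorem: var} does not assume; this is harmless, because $\hat{\tau}_{GCF,j,j'}$ is exactly the sample mean of your $\hat a_i$, so the consistency you need already follows from your own uniform bound on $\hat a_i-\tilde{\psi}_{j,j'}(\bm{O}_i)$ together with the law of large numbers for $\tilde{\psi}_{j,j'}$ --- which is how the paper avoids the issue by working with $\bar S$ and $\bar T$ throughout.
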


\begin{proof}

Define the oracle estimator (the true estimator) $\tau^{*}_{GAIPW,j,j'} = \tau^{*}_{GAIPW,j} - \tau^{*}_{GAIPW,j'},$
where 
\begin{equation*}
\tau^{*}_{GAIPW,j} = \frac{1}{N}\sum_{i=1}^{N}\Big[ \mu_j(\bm{X}_i) + \mathbbm{1}\{ Z_i=j\}  \frac{Y_i^{obs}-\mu_j(\bm{X}_i)   }{ e_j(\bm{X}_i) }  \Big].
\end{equation*}
First, we are going to show (\ref{oracle_Var}) holds:
\begin{equation}\label{oracle_Var}
Var(\tau^{*}_{GAIPW,j,j'}) \overset{p}{\rightarrow} V.
\end{equation}
Let $$T_i=\mu_j(\bm{X}_i) - \mu_{j'}(\bm{X}_i)  + \mathbbm{1}\{ Z_i=j\}  \frac{Y_i^{obs} - \mu_j(\bm{X}_i)   }{ e_j(\bm{X}_i) } - \mathbbm{1}\{ Z_i=j'\}  \frac{Y_i^{obs} - \mu_{j'}(\bm{X}_i)   }{ e_{j'}(\bm{X}_i) },$$ then we have
\begin{equation}
\frac{1}{N-1}\sum_{i=1}^{N}(T_i - \bar{T})^2 \overset{p}{\rightarrow} V_1,
\end{equation}
where $$V_1 = Var\Bigg[ \mu_j(\bm{X}_i) - \mu_{j'}(\bm{X}_i)  + \mathbbm{1}\{ Z_i=j\}  \frac{Y_i^{obs} - \mu_j(\bm{X}_i)   }{ e_j(\bm{X}_i) } - \mathbbm{1}\{ Z_i=j'\}  \frac{Y_i^{obs} - \mu_{j'}(\bm{X}_i)   }{ e_{j'}(\bm{X}_i) } \Bigg].$$
Notice that
\begin{flalign*}
& Var\Bigg[ \mathbbm{1}\{ Z_i=j\}  \frac{Y_i^{obs} - \mu_j(\bm{X}_i)   }{ e_j(\bm{X}_i) } \Bigg]
= \mathbb{E}\Bigg[ \mathbb{E}\bigg[ \mathbbm{1}\{ Z_i=j\} \Big[  \frac{Y_i^{obs} - \mu_j(\bm{X}_i)   }{ e_j(\bm{X}_i) } \Big]^2 \bigg| \bm{X}_i \bigg] \Bigg]&\\
&= \mathbb{E}\Bigg[ \frac{1}{e^2_j(\bm{X}_i)} \cdot \mathbb{E}\bigg[ \mathbbm{1}\{ Z_i=j\} \bigg| \bm{X}_i \bigg] \cdot   \mathbb{E}\bigg[ \Big( Y_i^{obs} - \mu_j(\bm{X}_i)  \Big)^2 \bigg| \bm{X}_i \bigg] \Bigg]&\\
&= \mathbb{E}\Bigg[ \frac{1}{e^2_j(\bm{X}_i)} \cdot e_j(\bm{X}_i) \cdot   \mathbb{E}\bigg[ \Big( Y_i^{obs} - \mu_j(\bm{X}_i)  \Big)^2 \bigg| \bm{X}_i \bigg] \Bigg]=\mathbb{E} \Big[ \frac{\sigma_j^2(\bm{X}_i)}{e_j(\bm{X}_i)} \Big].
\end{flalign*} 
Then $V_1$ can be written as 
\begin{flalign*}
&V_1 
= Var\Bigg[ \mu_j(\bm{X}_i) - \mu_{j'}(\bm{X}_i) \Bigg] +  Var\Bigg[ \mathbbm{1}\{ Z_i=j\}  \frac{Y_i^{obs} - \mu_j(\bm{X}_i)   }{ e_j(\bm{X}_i) } \Bigg]+ Var\Bigg[\mathbbm{1}\{ Z_i=j'\}  \frac{Y_i^{obs} - \mu_{j'}(\bm{X}_i)   }{ e_{j'}(\bm{X}_i) } \Bigg]\\
&= Var\Bigg[ \mu_j(\bm{X}_i) - \mu_{j'}(\bm{X}_i) \Bigg] + \mathbb{E} \Big[ \frac{\sigma_j^2(\bm{X}_i)}{e_j(\bm{X}_i)} \Big] + \mathbb{E} \Big[ \frac{\sigma_{j'}^2(\bm{X}_i)}{e_{j'}(\bm{X}_i)} \Big]=V.
\end{flalign*}
Therefore, (\ref{oracle_Var}) holds. The variance of the oracle GAIPW estimator $\tau^{*}_{GAIPW,j,j'}$ is asymptotically efficient, which attains the semiparametric efficiency bound of average treatment effect (ATE) proved in literature \citep{hahn1998role}. Next, we are going to show that the variance of the estimator $\hat{\tau}_{GCF, j,j'}$ converges to that of the oracle estimator's and hence also asymptotically efficient.

We define a function mapping $t(\cdot): \Omega \mapsto E$, $t(i)=k$ if $i \in I_k$. Then the equation (\ref{eq:var_wald}) can be re-written as:
\begin{equation}
\begin{split}
&\hat{V}ar(\hat{\tau}_{GCF, j,j'}) = \frac{1}{N-1}\sum_{i=1}^N\Bigg( \hat{\mu}_j^{(-t(i))}(\bm{X}_i) - \hat{\mu}_{j'}^{(-t(i))}(\bm{X}_i) \nonumber\\
& + 
\mathbbm{1}\{ Z_i=j\}  \frac{Y_i^{obs}-\hat{\mu}_j^{(-t(i))}(\bm{X}_i) }{\hat{e}_j^{(-t(i))}(\bm{X}_i)} 
 - \mathbbm{1}\{ Z_i=j' \}  \frac{Y_i^{obs} - \hat{\mu}_{j'}^{(-t(i))}(\bm{X}_i) }{1- \hat{e}_{j'}^{(-t(i))}(\bm{X}_i)} - \hat{\tau}_{GCDR, j,j'}  \Bigg)^2.
\end{split}
\end{equation}

Let $$S_i= \hat{\mu}_j^{(-t(i))}(\bm{X}_i) - \hat{\mu}_{j'}^{(-t(i))}(\bm{X}_i)+ 
\mathbbm{1}\{ Z_i=j\}  \frac{Y_i^{obs}-\hat{\mu}_j^{(-t(i))}(\bm{X}_i) }{\hat{e}_j^{(-t(i))}(\bm{X}_i)}- \mathbbm{1}\{ Z_i=j' \}  \frac{Y_i^{obs} - \hat{\mu}_{j'}^{(-t(i))}(\bm{X}_i) }{1- \hat{e}_{j'}^{(-t(i))}(\bm{X}_i)}.$$
By Triangle Inequality, the absolute difference of $S_i$ and $T_i$ can be decomposed into three terms as follows, of each converges to zero in probability.
\begin{flalign}\label{lam:equ2-1}
|S_i-T_i| &\leq \Bigg|\mu_j(\bm{X}_i)-\hat{\mu}_j^{(-t(i))}(\bm{X}_i)\Bigg|+ \Bigg| \mu_{j'}(\bm{X}_i)-\hat{\mu}_{j'}^{(-t(i))}(\bm{X}_i) \Bigg| &\\\label{lam:equ2-2}
& +\mathbbm{1}\{ Z_i=j\} \cdot \Bigg|   \frac{Y_i^{obs} - \mu_j(\bm{X}_i)   }{ e_j(\bm{X}_i) } -  \frac{Y_i^{obs}-\hat{\mu}_j^{(-t(i))}(\bm{X}_i) }{\hat{e}_j^{(-t(i))}(\bm{X}_i)}\Bigg|&\\\label{lam:equ2-3}
&+ \mathbbm{1}\{ Z_i=j'\} \cdot \Bigg|  \frac{Y_i^{obs} - \mu_{j'}(\bm{X}_i)   }{ e_{j'}(\bm{X}_i) } -  \frac{Y_i^{obs}-\hat{\mu}_{j'}^{(-t(i))}(\bm{X}_i) }{\hat{e}_{j'}^{(-t(i))}(\bm{X}_i)}\Bigg|
\end{flalign}

The quantity (\ref{lam:equ2-1}) converges in probability to zero by assumption $b$ in the theorem. For the quantities in (\ref{lam:equ2-2}) and (\ref{lam:equ2-2}), it is suffice to show $\Bigg|  \frac{Y_i^{obs} - \mu_j(\bm{X}_i)   }{ e_j(\bm{X}_i) } -  \frac{Y_i^{obs}-\hat{\mu}_j^{(-t(i))}(\bm{X}_i) }{\hat{e}_j^{(-t(i))}(\bm{X}_i)}\Bigg| \overset{p}{\rightarrow} 0$. To show this, we apply the  Triangle Inequality again and obtain the followings:

\begin{flalign*}
& \Bigg|  \frac{Y_i^{obs} - \mu_j(\bm{X}_i)   }{ e_j(\bm{X}_i) } -  \frac{Y_i^{obs}-\hat{\mu}_j^{(-t(i))}(\bm{X}_i) }{\hat{e}_j^{(-t(i))}(\bm{X}_i)}\Bigg|&\\
& \leq \Bigg|  \frac{Y_i^{obs} - \mu_j(\bm{X}_i)   }{ e_j(\bm{X}_i) } -  \frac{Y_i^{obs} - \hat{\mu}_j^{(-t(i))}(\bm{X}_i)   }{ e_j(\bm{X}_i) } \Bigg| +  \Bigg| \frac{Y_i^{obs} - \hat{\mu}_j^{(-t(i))}(\bm{X}_i)   }{ e_j(\bm{X}_i) } - \frac{Y_i^{obs}-\hat{\mu}_j^{(-t(i))}(\bm{X}_i) }{\hat{e}_j^{(-t(i))}(\bm{X}_i)}\Bigg|&\\
& \leq \Bigg|   \frac{ \mu_j(\bm{X}_i) -\hat{\mu}_j^{(-t(i))}(\bm{X}_i)  }{ e_j(\bm{X}_i) } \Bigg| + \Bigg| \frac{ \Big[ Y_i^{obs}-\hat{\mu}_j^{(-t(i))}(\bm{X}_i) \Big] \cdot \Big[ e_j(\bm{X}_i) - \hat{e}_j^{(-t(i))}(\bm{X}_i) \Big] }{ e_j(\bm{X}_i) \cdot \hat{e}_j^{(-t(i))}(\bm{X}_i)}  \Bigg| &\\
& \leq \left|   \frac{ \mu_j(\bm{X}_i) -\hat{\mu}_j^{(-t(i))}(\bm{X}_i)  }{ \xi} \Bigg| + \Bigg| \frac{ \Big[ Y_i^{obs}-\hat{\mu}_j^{(-t(i))}(\bm{X}_i) \Big] \cdot \Big[ e_j(\bm{X}_i) - \hat{e}_j^{(-t(i))}(\bm{X}_i) \Big] }{\xi \cdot \xi}  \right|. 
\end{flalign*}
Therefore, by assumptions $a$ and $b$, $$\Bigg|  \frac{Y_i^{obs} - \mu_j(\bm{X}_i)   }{ e_j(\bm{X}_i) } -  \frac{Y_i^{obs}-\hat{\mu}_j^{(-t(i))}(\bm{X}_i) }{\hat{e}_j^{(-t(i))}(\bm{X}_i)}\Bigg| \overset{p}{\rightarrow} 0,$$ and hence $|S_i-T_i| \overset{p}{\rightarrow} 0$, which also implies $|\bar{S}-\bar{T}| \overset{p}{\rightarrow} 0$.

Next, notice that $\frac{1}{N-1} \sum_{i=1}^{N}(S_i - \bar{S})^2$ can be decomposed as follows:
\begin{flalign}
\frac{1}{N-1} \sum_{i=1}^{N}(S_i - \bar{S})^2 &=  \frac{1}{N-1}\sum_{i=1}^{N}(T_i - \bar{T})^2& \label{var_s_1}\\
&+ \frac{1}{N-1}\sum_{i=1}^{N}(S_i - T_i)^2 + \frac{N}{N-1}(\bar{T}-\bar{S})^2 \label{var_s_2}&\\
&+ \frac{2}{N-1}\sum_{i=1}^{N}(S_i - T_i)(T_i-\bar{T}) + \frac{2(\bar{T}-\bar{S})}{N-1}\sum_{i=1}^{N}(S_i - T_i)\label{var_s_3}.
\end{flalign}
Since (\ref{var_s_1}) $\overset{p}{\rightarrow} V$ and (\ref{var_s_2}), (\ref{var_s_3}) $\overset{p}{\rightarrow} 0$, then $\hat{V}ar(\hat{\tau}_{GCF, j,j'})=\frac{1}{N-1} \sum_{i=1}^{N}(S_i - \bar{S})^2 \overset{p}{\rightarrow} V$.
Therefore, we have shown that when the difference between $S_i$ and $T_i$ is small, the variance $\hat{V}ar(\hat{\tau}_{GCF, j,j'})\overset{p}{\rightarrow} V$, and hence is also asymptotically efficient.
\end{proof}

Next, we are going to show asymptotic normality of the proposed GCF estimators.
\begin{theorem}\label{theorem: normal}
Under assumptions in Theorem \ref{theorem:GDCDR}, the estimator $\hat{\tau}_{GCF, j,j'}$ is asymptotic normal, i.e. for $j \neq j' \text{ and } j,j' \in \{1,...,J\}$, $\sqrt{n}(\hat{\tau}_{GCF, j, j'} - \tau^{*}_{GCF,j,j'}) \overset{d}{\rightarrow} N(0, V)$, where 
\begin{align*}
\tau^{*}_{GCF, j,j'} &= \tau^{*}_{GCF, j}- \tau^{*}_{GCF, j'}, \tau^{*}_{GCF, j} = \sum_{k=1}^{K} \frac{|I_k|}{N} \tau^{* I_k}_{j}, \text{ and } \\ \tau^{* I_k}_{j} &= \frac{1}{|I_k|} \sum_{i \in I_k} \bigg[  \mu_j(\bm{X}_i) + 
\mathbbm{1}\{ Z_i=j\}  \frac{Y_i^{obs}-\mu_j(\bm{X}_i) }{e_j(\bm{X}_i)}  \bigg].
\end{align*}
\end{theorem}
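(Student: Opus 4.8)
The plan is to route the central limit theorem through the oracle estimator and to use Theorem~\ref{theorem:GDCDR} to show that replacing the true nuisance functions by their estimates is asymptotically negligible at the $\sqrt n$ scale. The natural centering is the population target $\tau_{ATE_{j,j'}}$, which coincides with the mean of the oracle average $\tau^{*}_{GCF,j,j'}=\tau^{*}_{GAIPW,j,j'}$; accordingly I would split the scaled error telescopically as
\begin{align*}
\sqrt n\big(\hat\tau_{GCF,j,j'} - \tau_{ATE_{j,j'}}\big)
&= \sqrt n\big(\hat\tau_{GCF,j,j'} - \tau^{*}_{GCF,j,j'}\big) \\
&\quad + \sqrt n\big(\tau^{*}_{GCF,j,j'} - \tau_{ATE_{j,j'}}\big),
\end{align*}
and then treat the two summands separately: the first is a remainder I will argue vanishes, and the second is a genuine average of i.i.d.\ terms to which the classical CLT applies.

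For the first (remainder) term I would invoke exactly the computation already carried out in the proof of Theorem~\ref{theorem:GDCDR}. There it is shown fold by fold that $\sqrt n(\hat\tau^{I_k}_j - \tau^{*I_k}_j)\overset{p}{\rightarrow}0$; summing over the $K$ folds with the fixed weights $|I_k|/N$ (a finite linear combination of $o_P(1)$ terms, since $K$ is fixed) gives $\sqrt n(\hat\tau_{GCF,j} - \tau^{*}_{GCF,j})\overset{p}{\rightarrow}0$, and subtracting the analogous statement for $j'$ yields $\sqrt n(\hat\tau_{GCF,j,j'} - \tau^{*}_{GCF,j,j'})\overset{p}{\rightarrow}0$. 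This is precisely where cross-fitting and the three regularity conditions of Theorem~\ref{theorem:GDCDR} (sup-norm consistency of $\hat\mu^{I_k^c}$ and $\hat e^{I_k^c}$, plus the product-MSE rate) do all the work, so no new argument is needed.

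For the second term I would observe that $\tau^{*}_{GCF,j,j'}$ is exactly the oracle average $\frac1N\sum_{i=1}^N T_i$ with $T_i$ the i.i.d.\ summand defined in the proof of Theorem~\ref{theorem: var}. Two facts suffice. First, $\mathbb{E}[T_i]=\tau_{ATE_{j,j'}}$: under weak unconfoundedness (Assumption~\ref{as:a2}) and $\mathbb{E}[Y_i^{obs}\mid Z_i=j,\bm X_i]=\mu_j(\bm X_i)$, each inverse-propensity correction is mean-zero given $\bm X_i$, leaving $\mathbb{E}[\mu_j(\bm X_i)-\mu_{j'}(\bm X_i)]=\tau_{ATE_j}-\tau_{ATE_{j'}}$. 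Second, $\mathrm{Var}(T_i)=V$, computed explicitly in Theorem~\ref{theorem: var}, and finite under positivity (Assumption~\ref{as:a1}) together with boundedness of the outcome. With the $T_i$ i.i.d., mean $\tau_{ATE_{j,j'}}$, and finite variance $V$, the Lindeberg--L\'evy CLT gives $\sqrt n(\tau^{*}_{GCF,j,j'}-\tau_{ATE_{j,j'}})\overset{d}{\rightarrow}N(0,V)$. Combining the two pieces by Slutsky's theorem---remainder $\overset{p}{\rightarrow}0$, oracle term $\overset{d}{\rightarrow}N(0,V)$---yields the claimed $N(0,V)$ limit.

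The main obstacle here is conceptual rather than computational: one must recognize that the dependence introduced by fitting $\hat\mu^{I_k^c}$ and $\hat e^{I_k^c}$ on data outside the fold, then averaging over units inside the fold, is quarantined by cross-fitting, so the only term carrying genuine randomness at the $\sqrt n$ scale is the i.i.d.\ oracle average, whose limiting law is pinned down by the variance already identified in Theorem~\ref{theorem: var}. If one instead tried to prove the CLT directly for $\hat\tau_{GCF,j,j'}$ without this oracle bridge, the difficulty would be handling the within-fold conditional-i.i.d.\ structure and verifying a Lindeberg condition uniformly across folds; the detour through $\tau^{*}_{GCF,j,j'}$ reduces everything to the classical CLT plus the negligibility already established.
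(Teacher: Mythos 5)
Your proof is correct and follows the route the paper intends: bridge to the oracle average via Theorem~\ref{theorem:GDCDR}, identify its variance via Theorem~\ref{theorem: var}, apply the classical CLT to the i.i.d.\ oracle terms, and combine by Slutsky. The useful difference is in execution, and it matters. The paper's own proof first reduces to a single fold (asserting $\hat{\tau}_{GCF,j,j'} = \frac{1}{K}\sum_k \hat{\tau}^{I_k}_{j,j'} = \hat{\tau}^{I_k}_{j,j'}$ under equal fold sizes, which is not an identity between random variables), and then concludes by ``using the central limit theorem'' on $\sqrt{n}\bigl(\hat{\tau}^{I_k}_{j,j'} - \tau^{*I_k}_{j,j'}\bigr)$ --- a quantity that Theorem~\ref{theorem:GDCDR} has already shown converges to zero in probability, so as literally written the limit would be degenerate at $0$, not $N(0,V)$; the same slip is present in the theorem statement itself, which centers at the \emph{random} oracle average $\tau^{*}_{GCF,j,j'}$. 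Your telescoping decomposition around the population target $\tau_{ATE_{j,j'}}$, with the Lindeberg--L\'evy CLT applied to $\sqrt{n}\bigl(\tau^{*}_{GCF,j,j'} - \tau_{ATE_{j,j'}}\bigr)$ and the Theorem~\ref{theorem:GDCDR} remainder absorbed by Slutsky, is the correct formalization of what the paper means: it makes explicit that the centering must be the nonrandom parameter (the common mean of the i.i.d.\ terms $T_i$), and it bypasses the single-fold identification entirely by observing that the fold-weighted oracle averages collapse to the full-sample average $\frac{1}{N}\sum_{i=1}^{N} T_i$. One small caveat you implicitly acknowledge: the finiteness of $V$ and the variance identification come from Theorem~\ref{theorem: var}, whose hypotheses include $|Y_i^{obs}|\le M$; that condition is not formally among ``the assumptions in Theorem~\ref{theorem:GDCDR}'' cited by the statement, so an extra moment condition on the outcome should be recorded for the CLT step to be airtight.
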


\begin{proof}
Fix any $k \in \{1,...,K\}$, define a function $h\Big( \hat{\mu}_j^{I_k^c}(\bm{X}_i), \hat{e}_j^{I_k^c}(\bm{X}_i) \Big)$ as
\begin{align}
h\Big( \hat{\mu}_j^{I_k^c}(\bm{X}_i), \hat{e}_j^{I_k^c}(\bm{X}_i) \Big) &:= \hat{\mu}_j^{I_k^c}(\bm{X}_i) - \hat{\mu}_{j'}^{I_k^c}(\bm{X}_i) \nonumber \\ 
 &+ 
\mathbbm{1}\{ Z_i=j\}  \frac{Y_i^{obs}-\hat{\mu}_j^{I_k^c}(\bm{X}_i) }{\hat{e}_j^{I_k^c}(\bm{X}_i)} - \mathbbm{1}\{ Z_i=j' \}  \frac{Y_i^{obs} - \hat{\mu}_{j'}^{I_k^c}(\bm{X}_i) }{1- \hat{e}_{j'}^{I_k^c}(\bm{X}_i)}. \nonumber
\end{align}
Since when we have equal size for each $I_k$, $\hat{\tau}_{GCF, j,j'} =\frac{1}{K} \sum_{k=1}^{K} \hat{\tau}^{I_k}_{j,j'} = \hat{\tau}^{I_k}_{j,j'}$, then it is suffice to show that $\hat{\tau}^{I_k}_{j,j'} =  \frac{1}{|I_k|} \sum_{i \in I_k} h\Big( \hat{\mu}_j^{I_k^c}(\bm{X}_i), \hat{e}_j^{I_k^c}(\bm{X}_i) \Big)$ is normally distributed with the desired mean and variance.

Notice that $\forall$ $i \neq t \in \{1,...,N\}$ such that $X_i \in I_k$ and $X_t \in I_k$, since $X_i$'s are i.i.d, we have 
\begin{equation*}
    h\Big( \hat{\mu}_j^{I_k^c}(\bm{X}_i), \hat{e}_j^{I_k^c}(\bm{X}_i) \Big) \perp h\Big( \hat{\mu}_j^{I_k^c}(\bm{X}_t), \hat{e}_j^{I_k^c}(\bm{X}_t) \Big).
\end{equation*}
Since we have proved $\sqrt{n}(\hat{\tau}^{I_k}_{j,j'} - \tau^{* I_k}_{j,j'}) \overset{p}{\rightarrow} 0$ and $\hat{V}ar(\hat{\tau}^{I_k}_{j,j'})=\hat{V}ar(\hat{\tau}_{GCF,j,j'}) \overset{p}{\rightarrow} V$ in the proofs of Theorem \ref{theorem:GDCDR} and Theorem \ref{theorem: var}, using the central limit theorem, we obtain $\sqrt{n}(\hat{\tau}^{I_k}_{j,j'} - \tau^{* I_k}_{j,j'}) \overset{d}{\rightarrow} N(0, V)$.
\end{proof}

Therefore, we have $\sqrt{n}(\hat{\tau}_{GCF, j, j'} - \tau_{GCF,j,j'}) \overset{d}{\rightarrow} N(0, V)$. The confidence intervals constructed in (\ref{eq:var_wald}) is valid for large sample.

\subsection{Discussions and Comments}
The theoretical results of multi-treatment comparison using generalized doubly robust estimator with cross-fitting is similar to the binary case. The estimator $\hat{\tau}_{GCF, j,j'}$ is still doubly robust in the sense that if at least one of the postulated models is correctly specified, $\hat{\tau}_{GCF, j,j'}$ is a consistant estimator of $\tau_{ATE, j,j'}$. Without using cross-fitting approach, one can directly plug $\hat{\mu}_j^{I_k^c}(\bm{x})$ and $\hat{e}_j^{I_k^c}(\bm{x})$ estimated based on the entire dataset into the doubly robust estimator. Such approach requires restricting the model of $\mu_j(\bm{x})$ and $e_j(\bm{x})$ to be within Donsker function class so as to compute the asymptotic efficiency bound through deriving the corresponding efficient influence function expansion based on independent samples. In practice, it might be hard to show a machine learning model satisfy the Donsker class, and hence in theory, a reliable statistical inference can not be guaranteed. 

On the contrary, by using cross-fitting approach, one can make relatively weak assumptions that both root-mean squared errors of the nuisance parameters are bounded such that $MSE\big[ \hat{\mu}_j^{I_k^c}(\bm{x})\big] =o_p(n^{-1/4})$ and $MSE\big[ \hat{e}_j^{I_k^c}(\bm{x})  \big]=o_p(n^{-1/4})$, $\forall j \in \{1,...,J\}$. Under these assumptions, we have showed in Theorems \ref{theorem:GDCDR} and \ref{theorem: var} that a wider range of machine learning models can be applied to estimate the nuisance parameters and the GCF estimator still achieves $\sqrt{n}$ convergence rate and asymptotic efficiency.

The asymptotic results of GCF estimator provide a valid calculation for simultaneous confidence intervals and the variance estimation attains the semiparametric efficiency bound for average treatment effect (ATE). In other words, the variance estimation is optimal such that no "regular" estimators can improve its asymptotic performance. Additionally, similar to the idea of cross-validation in estimating test errors of an estimator, another side benefit of using cross-fitting is to mitigate overfitting in the case of over-parameterization or non-parametric estimation of $\mu_j(\bm{x})$ and $e_j(\bm{x})$.

\newpage
\section{Simulation Studies}
\subsection{Setups}
The performance of the generalized doubly robust estimator with cross-fitting (GCF) is further demonstrated and evaluated by comparing with other existing estimators for multiple treatment comparisons. We focus on comparing the estimator with the other two estimators difference-in-means (DIF) and generalized-AIPW (GAIPW) under three treatment levels ($J=3$) and six treatment levels ($J=6$). For each of the estimator, we access the bias, RMSE (Root-mean-square-error), and coverage of the 95\% confidence intervals. Bias is defined as the difference between estimated ATE and the true ATE. RMSE is calculated using the square root of the mean of squared bias. All the 95\% confidence intervals are computed using the wald-type asymptotic confidence interval. The coverage of confidence interval is calculated as the proportion of intervals containing the true parameter. For the GCF estimator, we choose fold size as three in estimating the nuisance parameters.

The data generating mechanism is similar to \citet{yang2016propensity}. Each subject is associated with a covariate vector $\bm{X}_i^T=(X_{i1}, X_{i2}, X_{i3}, X_{i4}, X_{i5}, X_{i6})$, where
\begin{equation*}
    \begin{bmatrix}
X_{i1}\\
X_{i2}\\
X_{i3}
\end{bmatrix} \sim N\left( \begin{bmatrix}
2\\
1\\
1
\end{bmatrix}, \begin{bmatrix}
2 & 1 & -1\\
1 & 1 & -0.5\\
-1 & -0.5  & 1
\end{bmatrix} \right),
\end{equation*}
$X_{i4} \sim U[-3,3]$, $X_{i5} \sim \chi_1^2$, $X_{i6} \sim Ber(0.5)$. The propensity scores are generated via multinomial logistic regression \begin{equation*}
    (D_{i1},...,D_{iJ}) \big| \bm{X}_i \sim Multinom\big(e_1(\bm{X}_i),...,e_J(\bm{X}_i) \big),
\end{equation*} where $e_j(\bm{X}_i) = \frac{\exp{(\bm{X}_i^T\bm{\beta}_j)}}{ \sum_{t=1}^{J} \exp{(\bm{X}_i^T\bm{\beta}_t})}$. The potential outcomes are from $Y_i(j) = \bm{X}_i^T\bm{\alpha}_j + \epsilon_i$, where the noise term $\epsilon_i \sim N(0,1)$. 

In our first and second simulation designs, we consider three treatment groups comparisons with sample size fixed at $N=1500$, $N=4500$, and $N=6000$. For the first simulation, we choose $\bm{\alpha}_1^T=(-1.5,1,1,1,1,1,1)$, $\bm{\alpha}_2^T=(-2, 2, 3, 1, 2, 2, 2)$, $\bm{\alpha}_3^T=(2, 3, 1, 2, -1, -1, -1)$, $\bm{\beta}_1^T=(0,0,0,0,0,0,0)$, $\bm{\beta}_2^T=0.3\times(0,1,1,1,-1,1,1)$, $\bm{\beta}_3^T=0.1\times(0,1,1,1,1,1,1)$  to simulate adequate overlap. The second simulation design also considers the same $\bm{\alpha}$'s as the first simulation design, while we choose $\bm{\beta}_1^T=(0,0,0,0,0,0,0)$, $\bm{\beta}_2^T=0.9\times(0,1,1,1,-1,1,1)$, $\bm{\beta}_3^T=0.1\times(0,1,1,1,1,1,1)$ to simulate lack of overlap. In our third simulation design, we consider six treatment groups comparisons with sample size fixed at $N=1500$, $N=4500$, and $N=6000$. We choose $\bm{\alpha}_1^T=(-1.5, 1, 1, 1, 1, 1, 1)$, $\bm{\alpha}_2^T=(-3, 2, 3, 1, 2, 2, 2)$, $\bm{\alpha}_3^T=(3, 3, 1, 2, -1, -1, 4)$, $\bm{\alpha}_4^T=(2.5, 4, 1, 2, -1, -1, -3)$, $\bm{\alpha}_5^T=(2, 5, 1, 2, -1, -1, -2)$ $\bm{\alpha}_5^T=(1.5, 6, 1, 2, -1, -1, -1)$, $\bm{\beta}_1^T=(0, 0, 0, 0, 0, 0, 0)$, $\bm{\beta}_2^T=0.2\times(0, 1, 1, 2, 1, 1, 1)$, $\bm{\beta}_3^T=0.3\times(0, 1, 1, 1, 1, 1, -5)$, $\bm{\beta}_3^T=0.4\times(0, 1, 1, 1, 1, 1, 5)$, $\bm{\beta}_4^T=0.5\times(0, 1, 1, 1, -2, 1, 1)$, $\bm{\beta}_4^T=0.6\times(0, 1, 1, 1, -2, -1, 1)$. In each simulation, the estimation repeats 2000 times.

\subsection{Results}
For the first and second simulation designs with $J=3$, we present the simulation results of sample size fixed at $N=1500$, $N=4500$, and $N=6000$ for both scenarios, adequate covariate overlap and lack of covariate overlap, in Table \ref{sim_res2} to Table \ref{sim_res7}. As can be seen from the results below, the performance of the baseline estimator DIF is significantly worse than the other two estimators, GAIPW and GCF, in either scenario. For the GAIPW and GCF estimators, when covariates have adequate overlap, both estimators can unbiasly estimate the average treatment effect for each treatment group comparison, and the performances of RMSEs and coverages are similar. As expected, the RMSEs are small and the coverage of confidence intervals are around $95\%$. On the other hand, with lack of overlap, the Table \ref{sim_res5} shows that due to the extreme propensity scores, the estimates of the average treatment effects are biased and under-coverage. 

For the third simulation design with $J=6$, we present the simulation results in Figure \ref{fig:cf_1500_6.1} to Figure \ref{fig:cf_1500_6.3}. As can be seen from Figure \ref{fig:cf_1500_6.1}, when the total sample size is small ($N=1500$), the RMSE and coverage of the GCF estimator are not as good as the GAIPW estimator, which shows the downside of using GCF estimator. When fitting nuisance parameters, GCF estimator divides the data into $K$ folds and fit the models with only one of the folds each time. It is not problematic when sample size is large. However, if the sample size is small, data splitting will result in numerical instability due to inadequate amount of data in fitting the nuisance models.

The simulation results for multiple treatment comparisons are consistent with the simulation results for binary settings presented by \citet{zivich2021machine}, which shows the performance of the GCF estimator were as efficient as the GAIPW estimator in binary setting. When $J=6$, both GCF and GAIPW estimators are unbiased in estimating average treatment effects, while the $95\%$ confidence interval coverages are not as good as $J=3$. This might due to the overlap issues are magnified when the number of treatment groups increases. The performance of the GCF estimator is similar to the performance of the estimator without cross-fitting when the nuisance models are correctly specified, the theorems proved in Chapter 3.2.2 provided both transparent assumptions and theories to support rigorous statistical inference while sufficiently flexible machine learning tools are used to model propensity scores and outcome regressions. High-dimensional data are common in practice, which might rely on complex machine learning tools to model the nuisance models. The GCF estimator plays as a safeguard to perform valid inference without concerning whether the models are from Donsker class or not.

\begin{table}[h!]
\begin{center}
	\caption{Simulation results of adequate overlap for J = 3 treatment group comparison with sample size $N=1500$.}
	\label{sim_res2}
	\begin{tabular}{l|llll}
		\hline
		Metric   &       & $\tau_{1, 2}$ & $\tau_{1, 3}$ & $\tau_{2,3}$ \\
		\hline 
		Bias     & DIF   &  0.9018 & -0.2340 & -1.1359   \\
		& GAIPW &  0.0021 & 0.0042 & 0.0021  \\
		& GCF   & 0.0019 & 0.0047 & 0.0028 \\
		\hline
		RMSE     & DIF   & 0.9546  & 0.3491 & 1.1878    \\
		& GAIPW & 0.1243  & 0.1502 & 0.1969   \\
		& GCF   & 0.1286  & 0.1528 & 0.1970   \\
		\hline
		Coverage & DIF   & 16.80\% & 84.00 \% & 9.15\%      \\
		& GAIPW & 94.60\% & 94.55\% & 95.10\%      \\
		& GCF   & 94.45\% & 94.05\% & 95.00\%      \\
		\hline
	\end{tabular}
\end{center}
\end{table}

\begin{table}[h!]
\begin{center}
	\caption{Simulation results of adequate overlap for J = 3 treatment group comparison with sample size $N=4500$.}
	\label{sim_res3}
	\begin{tabular}{l|llll}
		\hline
		Metric   &       & $\tau_{1, 2}$ & $\tau_{1, 3}$ & $\tau_{2,3}$ \\
		\hline 
		Bias     & DIF   & 0.9070 & -0.2403 & -1.1472  \\
		& GAIPW & 0.0008 &  0.0004 & -0.0004  \\
		& GCF   & 0.0004 & -0.0002 & -0.0004 \\
		\hline
		RMSE     & DIF   & 0.9244 & 0.2812 & 1.1643   \\
		& GAIPW & 0.0720 & 0.0850 & 0.1094 \\
		& GCF   & 0.0757 & 0.0888 & 0.1094 \\
		\hline
		Coverage & DIF   & 0.00\%    &  63.15\%    & 0.00\%    \\
		& GAIPW & 93.60\%    & 94.50\%    & 95.45\%  \\
		& GCF   & 93.85\%    & 94.15\%    & 95.45\%  \\
		\hline
	\end{tabular}
\end{center}	
\end{table}

\begin{table}[h!]
\begin{center}
	\caption{Simulation results of adequate overlap for J = 3 treatment group comparison with sample size $N=6000$.}
	\label{sim_res4}
	\begin{tabular}{l|llll}
		\hline
		Metric   &       & $\tau_{1, 2}$ & $\tau_{1, 3}$ & $\tau_{2,3}$ \\
		\hline 
		Bias     & DIF   & 0.9057  & -0.2442 & -1.1499  \\
		& GAIPW & -0.0014 & 0.0023 & 0.0037  \\
		& GCF   & -0.0015 & 0.0022 & 0.0037 \\
		\hline
		RMSE     & DIF   & 0.9193 & 0.2747 & 1.1628   \\
		& GAIPW & 0.0615 & 0.0764 & 0.0973 \\
		& GCF   & 0.0621 & 0.0769 & 0.0973 \\
		\hline
		Coverage & DIF   & 0.00\%    &  52.10\%    & 0.00\%    \\
		& GAIPW & 94.15\%    & 93.80\%    & 95.30\%   \\
		& GCF   & 93.85\%  & 93.90\%      & 95.35\%  \\
		\hline
	\end{tabular}
\end{center}
\end{table}

\begin{table}[h!]
\begin{center}
	\caption{Simulation results of lack of overlap for J = 3 treatment group comparison with sample size $N=1500$.}
	\label{sim_res5}
	\begin{tabular}{l|llll}
		\hline
		Metric   &       & $\tau_{1, 2}$ & $\tau_{1, 3}$ & $\tau_{2,3}$ \\
		\hline 
		Bias     & DIF   & 1.4518  & -1.3291 & -2.7809  \\
		& GAIPW & 0.0023 & -0.0051 & -0.0074 \\
		& GCF   & -0.0039  & -0.0877  & -0.0838  \\
		\hline
		RMSE     & DIF   & 1.4807  &  1.3555  &  2.7999   \\
		& GAIPW & 0.2417 &  0.3608 &  0.3312   \\
		& GCF   & 0.3595 & 3.4929 & 3.4846 \\
		\hline
		Coverage & DIF   & 0.00\%       & 0.15\%      & 0.00\%      \\
		& GAIPW & 78.65\%       &  75.85\%       &  87.40\%      \\
		& GCF   & 74.55\%       & 69.00\%       & 84.00\%      \\
		\hline
	\end{tabular}
\end{center}
\end{table}

\begin{table}[h!]
\begin{center}
	\caption{Simulation results of lack of overlap for J = 3 treatment group comparison with sample size $N=4500$.}
	\label{sim_res6}
	\begin{tabular}{l|llll}
		\hline
		Metric   &       & $\tau_{1, 2}$ & $\tau_{1, 3}$ & $\tau_{2,3}$ \\
		\hline 
		Bias     & DIF   & 1.4605 & -1.3302 & -2.7908  \\
		& GAIPW & 0.0054  & -0.0052  &  -0.0106 \\
		& GCF   & 0.0067 & -0.0045 & -0.0112 \\
		\hline
		RMSE     & DIF   & 1.4700   &  1.3391  &  2.7966    \\
		& GAIPW & 0.2072 & 0.2337 & 0.1612 \\
		& GCF   & 0.2731 & 0.3039 & 0.1843\\
		\hline
		Coverage & DIF   & 0.00\%       & 0.00\%      & 0.00\%      \\
		& GAIPW & 74.10\%       & 72.35\%       & 88.00\%      \\
		& GCF   &  72.15\%       &  69.40\%       &  86.55\%      \\
		\hline
	\end{tabular}
\end{center}
\end{table}

\begin{table}[h!]
\begin{center}
	\caption{Simulation results of lack of overlap for J = 3 treatment group comparison with sample size $N=6000$.}
	\label{sim_res7}
	\begin{tabular}{l|llll}
		\hline
		Metric   &       & $\tau_{1, 2}$ & $\tau_{1, 3}$ & $\tau_{2,3}$ \\
		\hline 
		Bias     & DIF   & 1.4596  &  -1.3334 & -2.7930  \\
		& GAIPW & 0.0038  &  0.0055  &  0.0017 \\
		& GCF   & 0.0064   & 0.0104  & 0.0040 \\
		\hline
		RMSE     & DIF   & 1.4667   &  1.3400 &  2.7975   \\
		& GAIPW & 0.1633  &  0.2145  & 0.1772 \\
		& GCF   & 0.1748  &  0.2688 & 0.2302\\
		\hline
		Coverage & DIF   & 0.00\%       & 0.00\%      & 0.00\%      \\
		& GAIPW & 73.65\%       & 71.85\%       & 86.95\%      \\
		& GCF   &  72.60\%       & 69.70\%       & 86.10\%      \\
		\hline
	\end{tabular}
\end{center}
\end{table}

\begin{figure}[h!]
\minipage{0.32\textwidth}
  \includegraphics[width=\linewidth]{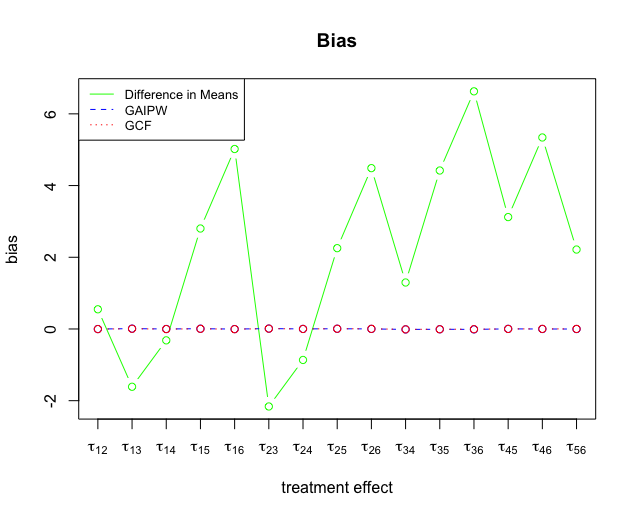}
\endminipage\hfill
\minipage{0.32\textwidth}
  \includegraphics[width=\linewidth]{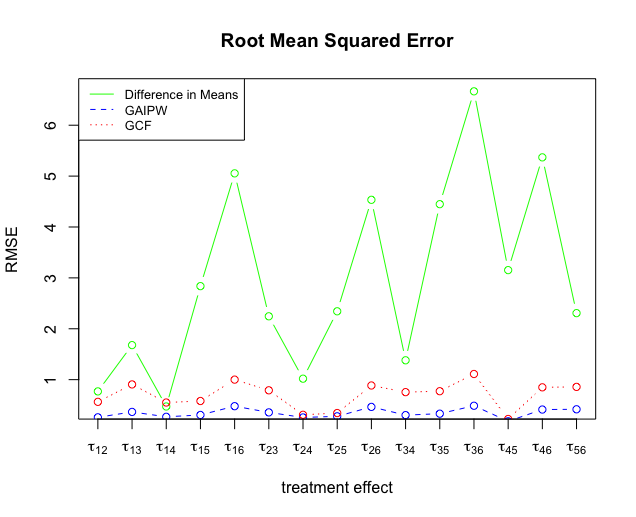}
\endminipage\hfill
\minipage{0.32\textwidth}%
  \includegraphics[width=\linewidth]{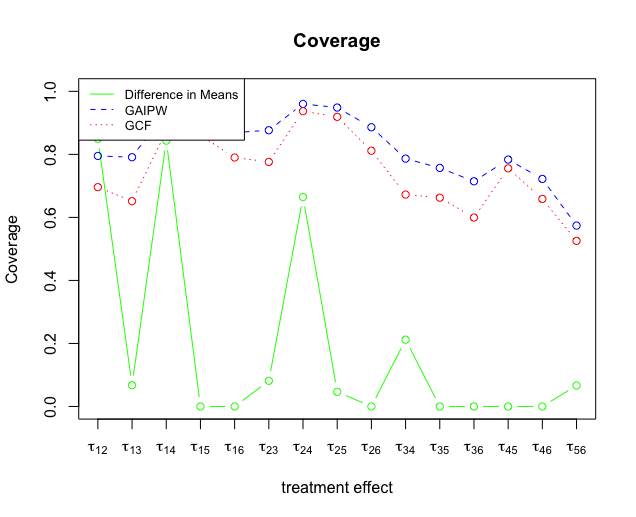}
\endminipage
\caption{Simulation results for J = 6 treatment group comparison with sample size $N=1500$.}\label{fig:cf_1500_6.1}
\end{figure}

\begin{figure}[h!]
\minipage{0.32\textwidth}
  \includegraphics[width=\linewidth]{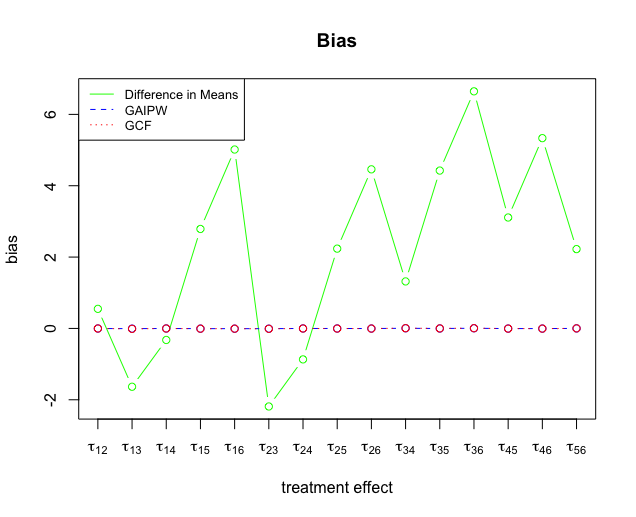}
\endminipage\hfill
\minipage{0.32\textwidth}
  \includegraphics[width=\linewidth]{rmse6_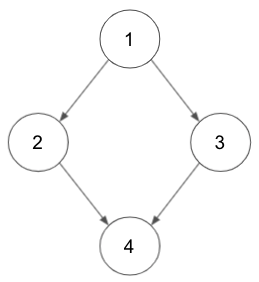}
\endminipage\hfill
\minipage{0.32\textwidth}%
  \includegraphics[width=\linewidth]{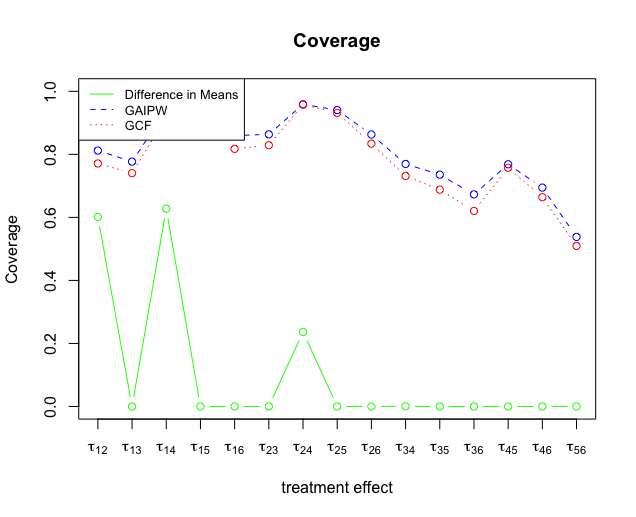}
\endminipage
\caption{Simulation results for J = 6 treatment group comparison with sample size $N=4500$.}\label{fig:cf_1500_6.2}
\end{figure}

\begin{figure}[h!]
\minipage{0.32\textwidth}
  \includegraphics[width=\linewidth]{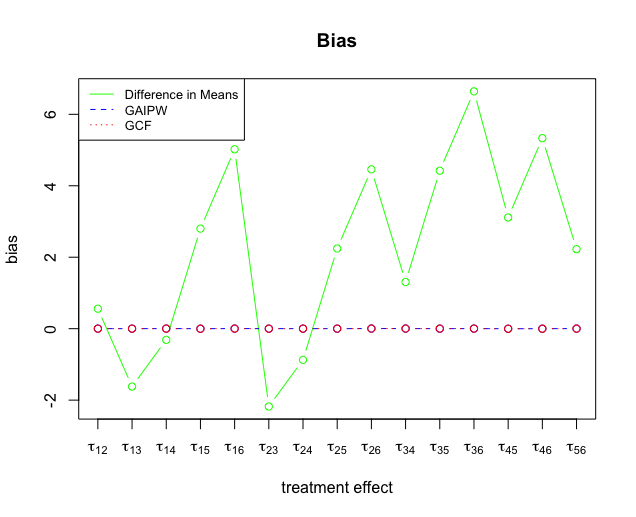}
\endminipage\hfill
\minipage{0.32\textwidth}
  \includegraphics[width=\linewidth]{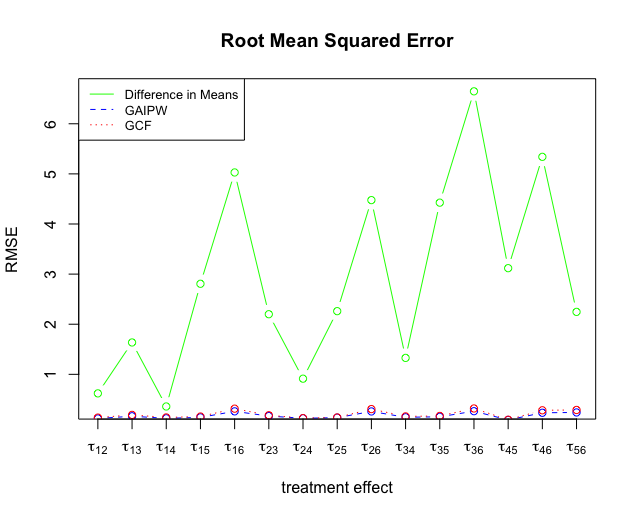}
\endminipage\hfill
\minipage{0.32\textwidth}%
  \includegraphics[width=\linewidth]{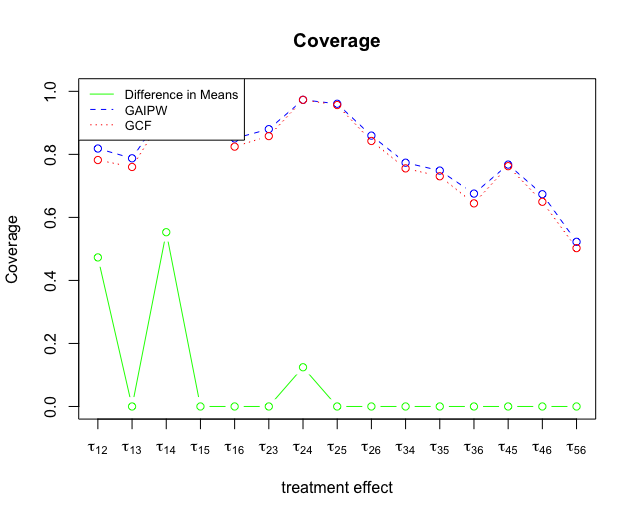}
\endminipage
\caption{Simulation results for J = 6 treatment group comparison with sample size $N=6000$.}\label{fig:cf_1500_6.3}
\end{figure}

\newpage
\section{Discussion and Conclusion}

In this paper, we generalized doubly-robust estimator with cross-fitting in multiple treatment comparison settings and provided the asymptotic simultaneous confidence intervals that attains semiparametric efficiency bound of average treatment effect. The asymptotic behaviors of this estimator have also been investigated. 

When simple parametric models do not fit the nuisance parameters well, one might rely on more flexible models to improve the goodness of the fit. The major benefit of using GCF estimator is that the cross-fitting technique provides comparably simpler assumptions than requiring models to be from Donsker class, so that flexible machine learning models can be applied in modeling the nuisance parameters and the asymptotic properties of the estimator such as  $\sqrt{n}$-consistency, asymptotic normality, and asymptotic efficiency are still promised. 

For GCF estimator, the assumptions on the consistency of outcome regressions and propensity scores, and on the convergence speed are relatively easy to satisfy through flexible machine learning models. However, similar to other estimators derived based on potential outcomes framework, the overlap assumption is still a strong assumption to make, especially in multiple treatment settings. Also, sampling splitting reduces the size of each fold, and in order to have behavior closer to the asymptotic assumptions, one may need a larger sample size or smaller number of $K$. The other cost of using cross-fitting techniques in estimating average treatment effects is that with small sample size, sample splitting might cause numerical instability. In the next chapter, we are going to discuss more complex treatment intervention settings, when taking time-varying variables into consideration.

\bibliographystyle{apalike}
\bibliography{thesis}
\end{document}